\documentclass[manuscript,nonacm]{acmart}
\settopmatter{authorsperrow=3} 
\settopmatter{printfolios=true}



\usepackage[utf8]{inputenc}
\usepackage{amsmath}
\usepackage{amsthm}
\usepackage{enumitem}
\usepackage{footnote}
\usepackage{algorithm,tabularx}
\usepackage{algpseudocode}
\newlength\myindent
\setlength\myindent{2em}

\usepackage{xcolor}
\usepackage{tikz}
\usetikzlibrary{arrows.meta}
\usetikzlibrary{arrows,automata,calc,hobby}
\usetikzlibrary{shapes.geometric}
\usetikzlibrary{patterns}
\usetikzlibrary{intersections,through,backgrounds,matrix}
\usepackage{pgfplots}
\usepgfplotslibrary{fillbetween}
\pgfdeclarelayer{bg}
\pgfsetlayers{bg,main}

\tikzset{
    hatch distance/.store in=\hatchdistance,
    hatch distance=10pt,
    hatch thickness/.store in=\hatchthickness,
    hatch thickness=0.3pt
}
\makeatletter
\pgfdeclarepatternformonly[\hatchdistance,\hatchthickness]{northeast}
{\pgfqpoint{0pt}{0pt}}
{\pgfqpoint{\hatchdistance}{\hatchdistance}}
{\pgfpoint{\hatchdistance-1pt}{\hatchdistance-1pt}}%
{
    \pgfsetcolor{\tikz@pattern@color}
    \pgfsetlinewidth{\hatchthickness}
    \pgfpathmoveto{\pgfqpoint{0pt}{0pt}}
    \pgfpathlineto{\pgfqpoint{\hatchdistance}{\hatchdistance}}
    \pgfusepath{stroke}
}

\pgfdeclarepatternformonly[\hatchdistance,\hatchthickness]{northwest}
{\pgfqpoint{0pt}{0pt}}
{\pgfqpoint{\hatchdistance}{\hatchdistance}}
{\pgfpoint{\hatchdistance-1pt}{\hatchdistance-1pt}}%
{
    \pgfsetcolor{\tikz@pattern@color}
    \pgfsetlinewidth{\hatchthickness}
    \pgfpathmoveto{\pgfqpoint{\hatchdistance}{0pt}}
    \pgfpathlineto{\pgfqpoint{0pt}{\hatchdistance}}
    \pgfusepath{stroke}
}

\pgfdeclarepatternformonly[\hatchdistance,\hatchthickness]{horizontal}
{\pgfqpoint{0pt}{0pt}}
{\pgfqpoint{\hatchdistance}{\hatchdistance}}
{\pgfpoint{\hatchdistance-1pt}{\hatchdistance-1pt}}%
{
    \pgfsetcolor{\tikz@pattern@color}
    \pgfsetlinewidth{\hatchthickness}
    \pgfpathmoveto{\pgfqpoint{0pt}{0pt}}
    \pgfpathlineto{\pgfqpoint{\hatchdistance}{0pt}}
    \pgfusepath{stroke}
}

\pgfdeclarepatternformonly[\hatchdistance,\hatchthickness]{vertical}
{\pgfqpoint{0pt}{0pt}}
{\pgfqpoint{\hatchdistance}{\hatchdistance}}
{\pgfpoint{\hatchdistance-1pt}{\hatchdistance-1pt}}%
{
    \pgfsetcolor{\tikz@pattern@color}
    \pgfsetlinewidth{\hatchthickness}
    \pgfpathmoveto{\pgfqpoint{0pt}{0pt}}
    \pgfpathlineto{\pgfqpoint{0pt}{\hatchdistance}}
    \pgfusepath{stroke}
}

\pgfdeclarepatternformonly{mydots}{\pgfqpoint{-1pt}{-1pt}}{\pgfqpoint{1pt}{1pt}}{\pgfqpoint{3pt}{3pt}}
{%
  \pgfpathcircle{\pgfqpoint{0pt}{0pt}}{1pt}%
  \pgfusepath{fill}%
}%
\pgfdeclarepatternformonly[\LineSpace]{my north west lines}
  {\pgfqpoint{-0.2pt}{-0.2pt}}
  {\pgfpoint{\LineSpace+0.2pt}{\LineSpace+0.2pt}}
  {\pgfqpoint{\LineSpace}{\LineSpace}}%
{
  \pgfsetcolor{\tikz@pattern@color}
  \pgfsetlinewidth{0.4pt}
  \pgfpathmoveto{\pgfqpoint{0pt}{\LineSpace}}
  \pgfpathlineto{\pgfqpoint{\LineSpace}{0pt}}
  \pgfusepath{stroke}
}
\newdimen\LineSpace
\tikzset{
  line space/.code={\LineSpace=#1},
  line space=3pt
}
\makeatother

\algrenewcommand\algorithmicindent{1.0em}%
\usepackage{booktabs}
\usepackage{subcaption}
\captionsetup{compatibility=false}
\newsavebox{\tempfig}
\newtheorem{theorem}{Theorem}
\newtheorem{lemma}[theorem]{Lemma}

\mathchardef\breakingcomma\mathcode`\,
{\catcode`,=\active
  \gdef,{\breakingcomma\discretionary{}{}{}}
}
\newcommand{\mathlist}[1]{\mathcode`\,=\string"8000 #1}

\long\def\com#1{}




\usepackage{xspace}
\newcommand{\prot}{\textsc{FnF-BFT}\xspace} 


\newcommand{\PACK}{\texttt{PACK}\xspace}
\newcommand{\ECO}{\texttt{EPOCHCHANGEOBJ}\xspace}

\newcommand{\PREPREPARE}{\texttt{PREPREPARE}\xspace}
\newcommand{\PREPARE}{\texttt{PREPARE}\xspace}
\newcommand{\PREPARED}{\texttt{PREPARED}\xspace}
\newcommand{\COMMIT}{\texttt{COMMIT}\xspace}
\newcommand{\COMMITTED}{\texttt{COMMITTED}\xspace}
\newcommand{\REQUEST}{\texttt{REQUEST}\xspace}
\newcommand{\ANSWER}{\texttt{ANSWER}\xspace}
\newcommand{\NEWCHECKPOINT}{\texttt{NEWCHECKPOINT}\xspace}
\newcommand{\CHECKPOINTED}{\texttt{CHECKPOINTED}\xspace}
\newcommand{\EPOCHCHANGE}{\texttt{EPOCHCHANGE}\xspace}
\newcommand{\NEWEPOCH}{\texttt{NEWEPOCH}\xspace}
\newcommand{\NEWEPOCHCONF}{\texttt{NEWEPOCHCONF}\xspace}
\tikzset{
short/.style={draw,rectangle,text height=3pt,text depth=13pt,
  text width=7pt,align=center,fill=gray!30},
long/.style={short,text width=1.5cm}
}


\def\lnode#1#2#3{%
  \node[long,right=of #1,label=center:#3] (#2) {}}


\usepackage[]{todonotes}

\begin{document}

\title{\prot: Exploring Performance Limits of BFT Protocols}

\author{Zeta Avarikioti}
\email{zetavar@ethz.ch}
\affiliation{\institution{ETH Z{\"u}rich}}
\author{Lioba Heimbach}
\email{hlioba@ethz.ch}
\affiliation{\institution{ETH Z{\"u}rich}}
\author{Roland Schmid}
\email{roschmi@ethz.ch}
\affiliation{\institution{ETH Z{\"u}rich}}
\author{Laurent Vanbever}
\email{lvanbever@ethz.ch}
\affiliation{\institution{ETH Z{\"u}rich}}
\author{Roger Wattenhofer}
\email{wattenhofer@ethz.ch}
\affiliation{\institution{ETH Z{\"u}rich}}
\author{Patrick Wintermeyer}
\email{patricwi@ethz.ch}
\affiliation{\institution{ETH Z{\"u}rich}}

\renewcommand{\shortauthors}{Avarikioti et al.}

\begin{abstract}
  We introduce \prot, a parallel-leader byzantine fault-tolerant state-machine replication protocol for the partially synchronous model with theoretical performance bounds during synchrony. By allowing all replicas to act as leaders and propose requests independently, \prot parallelizes the execution of requests. Leader parallelization distributes the load over the entire network -- increasing throughput by overcoming the single-leader bottleneck. We further use historical data to ensure that well-performing replicas are in command. \prot's communication complexity is linear in the number of replicas during synchrony and thus competitive with state-of-the-art protocols. Finally, with \prot, we introduce the first BFT protocol with performance guarantees in stable network conditions under truly byzantine attacks.
  A prototype implementation of \prot outperforms (state-of-the-art) HotStuff's throughput, especially as replicas increase, showcasing \prot's significantly improved scaling capabilities.
\end{abstract}



\keywords{State machine replication, consensus, byzantine fault tolerant, parallel leaders, performance optimization}


\maketitle
\section{Introduction}

\subsection{Motivation}

    In \textit{state machine replication (SMR)} protocols, distributed replicas aim to agree on a sequence of client requests in the presence of faults. To that end, SMR protocols rely strongly on another primitive of distributed computing, consensus.
    For protocols to maintain security under attack from malicious actors, consensus must be reached even when the replicas are allowed to send arbitrary information, namely under \textit{byzantine failures}. The protocols that offer these guarantees, i.e., are resilient against byzantine failures while continuing system operation, are known as \textit{byzantine  fault-tolerant (BFT)} protocols.

    The first practical BFT system, PBFT~\cite{castro2002practical}, was introduced more than two decades ago and has since sparked the emergence of numerous BFT systems~\cite{kotla2007zyzzyva,gueta2018sbft,yin2019hotstuff}. However, even today, BFT protocols do not scale well with the number of replicas, making large-scale deployment of BFT systems a challenge. Often, the origin of this issue stems from the \textit{single-leader bottleneck}: most BFT protocols rest the responsibility of executing client requests on a single leader instead of distributing it amongst replicas~\cite{stathakopoulou2019mir}. In such systems, the sole leader's hardware easily becomes overburdened with its duty as the central communication point of  message flow.

    Recently, protocols tackling the single-leader bottleneck through \textit{parallelization} emerged demonstrating staggering performance increases over state-of-the-art sequential-leader protocols~\cite{mao2008mencius,stathakopoulou2019mir,gupta2019scaling}. In the same fashion as most of their single leader counterparts, these works only consider non-malicious faults for the performance analysis. However, malicious attacks may lead to significant performance losses that are not evaluated. While these systems exhibit promising system performance with simple faults, they fail to lower-bound their performance in the face of malicious attacks from byzantine replicas.
    
    In this work, we propose \textsc{Fast'n'Fair-BFT} (\prot), a parallel-leader BFT protocol. 
    \prot circumvents the common single-leader bottleneck by utilizing parallel leaders to distribute the weight amongst all system replicas -- achieving a significant performance increase over sequential-leader systems. \prot scales well with the number of replicas and preserves \textit{high throughput even under arbitrarily malicious attacks from the byzantine replicas}.
    
    To establish this ability of our protocol, we define a new performance property, namely \textit{byzantine-resilient performance}, which encapsulates the ratio between the best-case and worst-case throughput of a BFT protocol, i.e., the effective utilization. Specifically, we bound this ratio to be constant, meaning that the throughput of a protocol under byzantine faults is lower-bounded by a constant fraction of the best-case throughput where no faults are present. We show that \prot achieves byzantine-resilient performance with a ratio of $16/27$ while maintaining \textit{safety} and \textit{liveness}. The analysis of \prot is conducted in the \textit{partially synchronous communication model}, meaning that a known bound $\Delta$ on message delivery holds after some unknown \textit{global stabilization time (GST)}. We further evaluate our protocol's efficiency by analyzing the amortized authenticator complexity after GST, similarly to HotStuff~\cite{yin2019hotstuff}.
    
    Finally, we provide a prototype implementation of \prot to demonstrate its scalability. Our implementation is based on  state-of-the-art Hostuff protocol~\cite{yin2019hotstuff}.
    \prot outperforms Hostuff's throughput by a factor rapidly increasing with the number of replicas, indicating remarkable improvement on scalability, while  
    exhibiting faster average performance.

\subsection{Related Work}
    Lamport et al.~\cite{lamport1982byzantine} first discussed the problem of reaching consensus in the presence of byzantine failures. Following its introduction, byzantine fault tolerance was initially studied in the synchronous network setting~\cite{pease1980reaching,dolev1982polynomial,dolev1985bounds}.
    Concurrently, the impossibility of deterministically reaching consensus in the asynchronous setting with a single replica failure was shown by Fischer et al.~\cite{fischer1985impossibility}. Dwork et al.~\cite{dwork1988consensus} proposed the concept of partial synchrony and demonstrated the feasibility of reaching consensus in partially synchronous networks. While the presented protocol always ensures safety, liveness relies on synchronous network conditions. During synchrony, the communication complexity of DSL is $\mathcal{O}(n^4 )$ -- making it unsuitable for deployment. In contrast to these works, \prot guarantees safety and liveness in partial synchrony, while the communication complexity is only $\mathcal{O}(n)$.
    
    Reaching consensus is needed to execute requests for state machine replication. Reiter~\cite{reiter1994secure,reiter1995rampart} introduced Rampart, an early protocol tackling byzantine fault tolerance for state machine replication. Rampart excludes faulty replicas from the group and replaces them with new replicas to make progress. Thus, Rampart relies on failure detection, which cannot be accurate in an asynchronous system, as shown by Lynch~\cite{lynch1996distributed}. \prot does not rely on failure detection. 

    With PBFT, Castro and Liskov~\cite{castro2002practical} devised the first efficient protocol for state machine replication that tolerates byzantine failures. The leader-based protocol requires $\mathcal{O}(n^2)$ communication to reach consensus, as well as $\mathcal{O}(n^3)$ for leader replacement. While widely deployed, PBFT does not scale well when the number of replicas increases. The quadratic complexity faced by the leader represents PBFT's bottleneck~\cite{bessani2014state}. While the PBFT implementation introduced by Behl et al.~\cite{behl2015consensus,behl2017hybster} is optimized for multi-cores, the complexity faced at the leader still presents the bottleneck of the state-of-the-art implementation. In this work, we tackle this issue by introducing $n$ parallel leaders that share the weight, thus efficiently alleviating the single leader's bottleneck.
    
    Kotla et al.~\cite{kotla2007zyzzyva} were the first to achieve $\mathcal{O}(n)$ complexity with Zyzzyva, an optimistic linear path PBFT. The complexity of leader replacement in Zyzzyva remains $\mathcal{O}(n^3)$, and safety violations were later exposed~\cite{abraham2017revisiting}. SBFT, devised by Gueta et al.~\cite{gueta2018sbft}, is a recent leader-based protocol that achieves $\mathcal{O}(n)$ complexity and improves the complexity of exchanging leaders to $\mathcal{O}(n^2)$. While reducing the overall complexity, the single leader is the bottleneck for both Zyzzyva and SBFT.

    Developed by Yin et al.~\cite{yin2019hotstuff}, leader-based HotStuff matches the $\mathcal{O}(n)$ complexity of Zyzzyva and SBFT. HotStuff rotates the leader with every request and is the first to achieve $\mathcal{O}(n)$ for leader replacement. However, HotStuff offers little parallelization, and experiments have revealed high complexity in practice~\cite{stathakopoulou2019mir}. While HotStuff's pipeline design offers an improvement over PBFT, its primary downside lies in the sequential proposal of requests and results in a lack of parallelism. On the contrary, $n$ parallel leaders propose requests simultaneously in \prot.

    Mao et al.~\cite{mao2008mencius,milosevic2013bounded} were the first to point out the importance of multiple leaders for high-performance state machine replication with Mencius and BFT-Mencius. Mencius maps client requests to the closest leader, and in turn, requests can become censored. However, no de-duplication measures are in place to handle the re-submission of censored client requests. \prot addresses this problem by periodically rotating leaders over the client space. 
    
    Gupta et al.~\cite{gupta2019scaling} recently introduced MultiBFT. MultiBFT is a protocol-agnostic approach to parallelize and improve existing BFT protocols. While allowing multiple instances to each run an individual client request, the protocol requires instances to unify after each request -- creating a significant overhead. Additionally, MultiBFT relies on failure detection, which is only possible in synchronous networks~\cite{lynch1996distributed}. With \prot, we allow leaders to make progress independently of each other without relying on failure detection. 
    
    Similarly, Stathakopoulou et al.~\cite{stathakopoulou2019mir} further investigated multiple leader protocols with Mir. Mir significantly improves throughput in comparison to sequential-leader approaches. However, as Mir runs instances of PBFT on a set of leaders, it incurs $\mathcal{O}(n^2)$ complexity, as well as $\mathcal{O}(n^3)$ complexity to update the leader set. We further expect Mir's performance to drop significantly in the presence of fully byzantine replicas, despite its high-throughput in the presence of crash failures.
    Mir updates the leader set as soon as a single leader in the set stops making progress -- allowing byzantine leaders to repeatedly end epochs early. \prot, however, continues to make progress in the presence of unresponsive byzantine leaders. We also show that the byzantine-resilient throughput is a constant fraction of the best-case throughput. 

    Byzantine resilience was initially explored by Clement et al.~\cite{clement2009making} who introduced Aardvark. Aardvark is an adaptation of PBFT with frequent view-changes: a leader only stays in its position when displaying an increasing throughput level. This first approach, however, comes with significant performance cuts in networks without failures. Parallel leaders allow \prot to be byzantine-resilient without accepting significant performance losses in an ideal setting.
    
    Byzantine resilience has further been studied since the introduction of Aardvark. Prime, proposed by Amir et al.~\cite{amir2008byzantine,amir2010prime}, aims to maximize performance in malicious environments. Besides, adding delay constraints that further confine the partially synchronous network model, Prime restricts its evaluation to delay attacks, i.e., the leader adds as much delay as possible to the protocol. Similarly, Veronese et al.~\cite{veronese2009spin} only evaluated their proposed protocol, Spinning, in the presence of delay attacks -- not fully capturing possible byzantine attacks. Consequently, the maximum performance degradation Spinning and Prime can incur under byzantine faults is at least 78\%~\cite{aublin2013rbft}. We analyze \prot theoretically to capture the entire spectrum of possible byzantine attacks.
    
    Aublin et al.~\cite{aublin2013rbft} further explored the performance of BFT protocols in the presence of byzantine attacks with RBFT. RBFT runs $f$ backup instances on the same set of client requests as the master instance to discover whether the master instance is byzantine. Thus, RBFT incurs quadratic communication complexity for every request. In this work, we reduce the communication complexity to $\mathcal{O}(n)$ and further increase performance through parallelization -- allowing byzantine-resilience without the added burden of detecting byzantine leaders. 

\subsection{Our Contribution}
    To the best of our knowledge, we introduce the first multiple leader BFT protocol with performance guarantees in stable network conditions under truly byzantine attacks, which we term \prot. Specifically, \prot is the first BFT protocol that achieves all the following properties:
    \begin{itemize}
        \item \textbf{Optimistic Performance:} After GST, the best-case throughput is $\Omega(n)$ times higher than the throughput of sequential-leader protocols.
        \item \textbf{Byzantine-Resilient Performance:} After GST, the worst-case throughput of the system is at least a constant fraction of its best-case throughput.
        \item \textbf{Efficiency:} After GST, the amortized authenticator complexity of reaching consensus is $\Theta(n)$.
    \end{itemize}
    
    We achieve these properties by combining two key components. First, we enable all replicas to continuously act as leaders in parallel to share the load of clients' requests. Second, unlike other protocols, we do not replace leaders upon failure but configure each leader's load based on the leader's past performance. With this combination, we guarantee a \textit{fair}    distribution of request according to each replica's capacity, which in turn results in \textit{fast} processing of requests.
    
    We further evaluate \prot's performance with a prototype implementation demonstrating its significantly improved scalability as well as its fast performance and high transaction throughput in comparison with state-of-the-art protocol HotStuff~\cite{yin2019hotstuff}. 
    
    The rest of the paper is structured as follows. We first define the model and the protocol goals (Section~\ref{sec:model}). Then, we introduce the design of \prot (Section~\ref{sec:prot}). Later, we present a security and performance analysis of our protocol (Section~\ref{sec:analysis}). 
    We conclude with an evaluation of \prot's performance (Section~\ref{sec:eval}).
    
\section{The Model}\label{sec:model}
    \subsection{System model}
    The system consists of $n=3f+1$ authenticated replicas and a set of clients. We index replicas by $i \in [n]=\{1, 2, \dots, n\}$.
    Throughout a protocol execution, at most $f$ unique replicas in the system are \textit{byzantine}, that is, instead of following the protocol they are controlled by an adversary with full information on their internal state.
    All other replicas are assumed to be \textit{correct}, i.e., following the protocol.
    Byzantine replicas may exhibit arbitrary adversarial behavior, meaning they can also behave like correct replicas.
    The adversary cannot intercept the communication between two correct replicas.
    Any number of clients may be byzantine.
    
    \subsection{Communication Model}
     We assume a \textit{partially synchronous communication model},i.e., a known bound $\Delta$ on message transmission will hold between any two correct replicas after some unknown \textit{global stabilization time (GST)}. 
     We show that \prot is safe in \textit{asynchrony}, that is, when messages between correct replicas are assumed to arrive in arbitrary order after any finite delay. We evaluate all other properties of the system after GST thus assuming a synchronous network.

    \subsection{Cryptographic Primitives}
        We make the usual cryptographic assumptions: the adversary is computationally bounded, and cryptographically-secure communication channels, computationally secure hash functions, (threshold) signatures, and encryption schemes exist.
        Similar to other BFT algorithms~\cite{amir2008steward,yin2019hotstuff,gueta2018sbft}, \prot makes use of threshold signatures. In a $(l,n)$ threshold signature scheme, there is a single public key held by all replicas and clients.
        Additionally, each replica~$u$ holds a distinct private key allowing to generate a partial signature~$\sigma_u(m)$ of any message~$m$.
        Any set of $l$ distinct partial signatures for the same message, $\left\{\sigma_u(m)\mid u \in U,  |U| = k\right\}$ can be combined (by any replica) into a unique signature $\sigma(m)$.
        The combined signature can be verified using the public key. We assume that the scheme is \textit{robust}, i.e., any verifier can easily filter out invalid signatures from malicious participants. In this work, we use a threshold $l=2f+1$.
        
    \subsection{Authenticator Complexity}
    \label{sec:authenticator-complexity}
        Message complexity has long been considered the main throughput-limiting factor in BFT protocols~\cite{gueta2018sbft,yin2019hotstuff}.
        In practice, however, the throughput of a BFT protocol is limited by both its computational footprint (mainly caused by cryptographic operations), as well as its message complexity. Hence, to assess the performance and efficiency of \prot, we adopt a complexity measure called authenticator complexity~\cite{yin2019hotstuff}.
        
        An \textit{authenticator} is any (partial) signature. We define the \textit{authenticator complexity} of a protocol as the number of all computations or verifications  of any authenticator done by replicas during the protocol execution.
        Note that the authenticator complexity also captures the message complexity of a protocol if, like in \prot, each message can be assumed to contain at least one signature.
        Unlike \cite{yin2019hotstuff}, where only the number of received signatures is considered, our definition allows to capture the load handled by each individual replica more accurately. Note that authenticator complexities according to the two definitions only differ by a constant factor.
        We only analyze the authenticator complexity after GST, as it is impossible for a BFT protocol to ensure deterministic progress and safety at the same time in an asynchronous network~\cite{fischer1985impossibility}.
        
     \subsection{Protocol Overview}
        The \prot protocol implements a state machine (cf.\ Section~\ref{sec:goals}) that is replicated across all replicas in the system. Clients broadcast requests to the system. Given client requests, replicas decide on the order of request executions and deliver commit-certificates to the clients.
        
        Our protocol moves forward in \textit{epochs}. In an epoch, each replica~$u$ is responsible for ordering a set of up to $C_u$ client requests that are independent of all requests ordered by other replicas in the epoch. Every replica in the system simultaneously acts as both a leader and a backup to the other leaders. The number of assigned client requests~$C_u$ is based on $u$'s past performance as a leader. During the epoch-change, a designated replica acting as primary: (a)~ensures that all replicas have a consistent view of the past leader and primary performance, (b)~deduces non-overlapping sequence numbers for each leader, and (c)~assigns parts of the client space to leaders.
        
        An epoch-change occurs whenever requested by more than two-thirds of the replicas. When seeking an epoch-change, a replica immediately stops participating in the previous epoch. The primary in charge of the epoch-change is selected through periodic rotation based on performance history. Replicas request an epoch-change if: (a)~all replicas~$u$ have exhausted their $C_u$ requests, (b)~a local timeout is exceeded, or (c)~enough other replicas request an epoch-change. Hence, epochs have bounded-length.
        
    \subsection{Protocol Goals}
    \label{sec:goals}
        \prot achieves scalable and byzantine fault-tolerant \textit{state machine replication (SMR)}. At the core of SMR, a group of replicas decide on a growing log of client requests. Clients are provided with cryptographically secure certificates which prove the commitment of their request.
        Fundamentally, the protocol ensures:
        \begin{enumerate}
            \item \textbf{Safety:} If any two correct replicas commit a request with the same sequence number, they both commit the same request.
            \item \textbf{Liveness:} If a correct client broadcasts a request, then every correct replica eventually commits the request.
        \end{enumerate}
        Thus, \prot will eventually make progress, and valid client requests cannot be censored.
        Additionally, \prot guarantees low overhead in reaching consecutive consensus decisions.
        Unlike other protocols limiting the worst-case efficiency for a single request, we analyze the amortized authenticator complexity per request after GST. We find this to be the relevant throughput-limiting factor:
        \begin{enumerate}[resume]
            \item \textbf{Efficiency:} After GST, the amortized authenticator complexity of reaching consensus is $\Theta(n)$.
        \end{enumerate}
        
        Furthermore, \prot achieves competitive performance under both optimistic and pessimistic adversarial scenarios:
        \begin{enumerate}[resume]
            \item \textbf{Optimistic Performance:} After GST, the best-case throughput is $\Omega(n)$ times higher than the throughput of sequential-leader protocols. 
            \item \textbf{Byzantine-Resilient Performance:} After GST, the worst-case throughput of the system is at least a constant fraction of its best-case throughput.
        \end{enumerate}
        Hence, unlike many other BFT systems, \prot guarantees that byzantine replicas cannot arbitrarily slow down the system when the network is stable.

\section{\prot}\label{sec:prot}
    \prot executes client requests on a state machine replicated across a set of $n$ replicas.
    We advance \prot in a succession of epochs -- identified by monotonically increasing \textit{epoch numbers}. Replicas in the system act as leaders and backups concurrently. As a leader, a replica is responsible for ordering client requests within its jurisdiction. Each leader~$v$ is assigned a predetermined number of requests~$C_v$ to execute during an epoch. To deliver a client request, $v$ starts by picking the next available sequence number and shares the request with the backups. Leader~$v$ must collect $2f+1$ signatures from replicas in the leader prepare and commit phase (Algorithm~\ref{pseudo}) to commit the request. We employ threshold signatures for the signature collection -- allowing us to achieve linear authenticator complexity for reaching consensus on a request. Additionally, we use low and high watermarks for each leader to represent a range of request sequence numbers that each leader can propose concurrently to boost individual leaders' throughput.

    Each epoch has a unique primary responsible for the preceding epoch-change, i.e., moving the system into the epoch. The replica elected as primary changes with every epoch and its selection is based on the system's history. A replica calls for an epoch-change in any of the following cases: (a)~the replica has locally committed requests for all sequence numbers available in the epoch, (b)~the maximum epoch time expired, (c)~the replica has not seen sufficient progress, or (d)~the replica has observed at least $f+1$ epoch-change messages from other replicas. 
    
    \prot generalizes PBFT~\cite{castro2002practical} to the $n$ leader setting. Additionally, we avoid PBFT's expensive all-to-all communication during epoch operation, similarly to Linear-PBFT~\cite{gueta2018sbft}. 
    Throughout this section, we discuss the various components of the protocol in further detail.
    
    \subsection{Client}
        Each client has a unique identifier. A client~$c$ requests the execution of an operation~$r$ by sending a $\mathlist{\langle \text{request}, r , t, c \rangle}$ to all leaders. Here, timestamp~$t$ is a monotonically increasing sequence number used to order the requests from one client. By using watermarks, we allow clients to have more than one request in flight. Client watermarks, low and high, represent the range of timestamp sequence numbers which the client can propose concurrently. Thus, we require $t$ to be within the low and high watermarks of client~$c$. The client watermarks are advanced similarly to the leader watermarks (cf.\ Section~\ref{sec:checkpoint}).
        Upon executing operation~$r$, replica~$u$ responds to the client with $\mathlist{\langle \text{reply}, e , d, u \rangle}$, where $e$ is the epoch number and $d$ is the request digest (cf.\ Section~\ref{sec:epoch-operation})\footnote{Instead of committing client request independently, the protocol could easily be adapted to process client requests in batches -- a standard BFT protocol improvement~\cite{kotla2007zyzzyva,yin2019hotstuff,stathakopoulou2019mir}.}. The client waits for $f+1$ such responses from the replicas.
    \subsection{Sequence Number Distribution}
        
        We distribute sequence numbers to leaders for the succeeding epoch during the epoch-change. While we commit requests from each leader in order,  the requests from different leaders are committed independently of each other in our protocol. Doing so allows leaders to continue making progress in an epoch, even though other leaders might have stopped working. Otherwise, a natural attack for byzantine leaders is to stop working and force the system to an epoch-change. Such attacks are possible in other parallel-leader protocols such as Mir~\cite{stathakopoulou2019mir}.
        
        To allow leaders to commit requests independently of each other, we need to allocate sequence numbers to all leaders during the epoch-change. Thus, we must also determine the number of requests each leader is responsible for before the epoch. The number of requests for leader~$v$ in epoch~$e$ is denoted by $C_v(e)$. It can be computed deterministically by all replicas in the network, based on the known history of the system (cf.\ Section~\ref{sec:epoch-change}).
       
        When assigning sequence numbers, we first automatically yield to each leader~$v \in [n]$ the sequence numbers of the $O_v(e)$ existing hanging operations from previous epochs in the assigned bucket(s). The remaining $C_v(e)-O_v(e)$ sequence numbers for each leader are distributed to them one after each other according to their ordering from the set of available sequence numbers. Note that $O_v(e)$ cannot exceed $C_v(e)$.
        For each leader~$v$ the assigned sequence numbers are mapped to local sequence numbers $1_{v,e}, 2_{v,e}, \dots, C_v(e)_{v,e}$ in epoch $e$. These sequence numbers are later used to simplify checkpoint creation (cf.\ Section~\ref{sec:checkpoint}).
        
    \subsection{Hash Space Division}\label{sec:hash}
    
        The request hash space is partitioned into buckets to avoid duplication. Each of these buckets is assigned to a single leader in every epoch. We consider the client identifier to be the request input and hash the client identifier ($h_c=h(c)$) to map requests into buckets. The hash space partition ensures that no two conflicting requests will be assigned to different leaders\footnote{Note that in case the requests are transactions with multiple inputs, the hash space division is more challenging to circumvent double-spending attacks. In such cases, we can employ well-known techniques~\cite{zamani2018rapidchain,kokoris2017omniledger} with no performance overhead as long as the average number of transactions' inputs remains constant~\cite{avarikioti2019divide}.}.
            
        Thus, the requests served by different leaders are independent of each other. Additionally, the bucket assignment is rotated round-robin across epochs, preventing request censoring. The hash space is portioned into $m\cdot n$ non-intersecting buckets of equal size, where $m\in \mathbb{Z}^+$ is a configuration parameter. Each leader~$v$ is then assigned $m_v(e)$ buckets in epoch~$e$ according to their load $C_v(e)$ (cf.\ Section~\ref{sec:epoch-change}). Leaders can only include requests from their active buckets. 
        
        When assigning buckets to leaders, the protocol ensures that every leader is assigned at least one bucket, as well as distributing the buckets according to the load handled by the leaders. Precisely, the number of buckets leader~$v$ is assigned in epoch~$e$ is given by 
        $m_v(e)= \left \lfloor{\dfrac{C_v(e)}{\sum _{u\in [n]} C_u(e)} (m-1)\cdot n}\right \rfloor+ 1 + \tilde{m}_v(e),$
        where $\tilde{m}_v(e)\in \{0,1\}$ distributes the remaining buckets to the leaders -- ensuring $\sum _{u \in [n]}m_u(e)=m \cdot n$. The remaining buckets are allocated to leaders~$v$ with the biggest value: 
        $\left \lfloor{\dfrac{C_v(e)}{\sum _{u\in [n]} C_u(e)} (m-1)\cdot n}\right \rfloor+ 1 - \dfrac{C_v(e)}{\sum _{u\in [n]} C_u(e)}\cdot m\cdot n.$
        
        Note that the system will require a sufficiently long stability period for all correct leaders to be working at their capacity limit, i.e., $C_v(e)$ matching the performance of leader~$v$ in epoch~$e$. Once correct leaders function at their capacity, the number of buckets they serve matches their capacity.
        The hash buckets are distributed to the leaders through a deterministic rotation such that each leader repeatedly serves each bucket under $f+1$ unique primaries. This rotation prevents byzantine replicas from censoring specific hash buckets. 
        For the remaining paper, we assume that there are always client requests pending in each bucket. Since we aim to optimize throughput, we consider this assumption in-sync with our protocol goals.
        
    \subsection{Primary Rotation}
        While all replicas are tasked with being a leader at all times, only a single replica, the primary, initiates an epoch. \prot assigns primaries periodically, exploiting the performance of good primaries and being reactive to network changes.
        The primary rotation consists of two core building blocks. First, \prot repeatedly rotates through the $2f+1$ best primaries and thus exploits their performance. Second, the primary assignment ensures that \prot explores every primary at least once within a sliding window. The sliding window consists of $g\in \mathbb{Z}$ epochs, and we set $g\geq 3f+1$ to allow the exploration of all primaries throughout a sliding window. We depict a sample rotation in Figure~\ref{fig:phases}. 

        \begin{figure}[hbt]
            \centering
            \begin{tikzpicture}[scale=0.65,fill left half/.style={path picture={\fill[#1] (path picture bounding box.0) rectangle (path picture bounding box.150);}}, fill lower right/.style={path picture={\fill[#1] (path picture bounding box.south west) -- (path picture bounding box.north east) |-cycle;}}]
            \node[anchor =south,, align=center] at (7*0.25+0.0625,1.7) {replica~$u$'s\\last turn};
            \node[anchor =south] at (7*0.25+0.0625,1) {$\big\downarrow$};
            
            \node[anchor =south,, align=center] at (0.25+46*0.25+0.0625,1.7) {replica~$u$\\re-evaluated};
            \node[anchor =south] at (0.25+46*0.25+0.0625,1) {$\big\downarrow$};
            
            \foreach \x in {1,...,7} {
                \draw[thick,fill=cyan] (\x*0.25,0) rectangle (\x*0.25+0.125,1);
            }
            
            \foreach \x in {1,...,7} {
                \draw[thick,fill=cyan] (\x*0.25+8*0.25,0) rectangle (\x*0.25+8*0.25+0.125,1);
            }
            \foreach \x in {1,...,4} {
                \draw[thick,fill=cyan] (\x*0.25+16*0.25,0) rectangle (\x*0.25+16*0.25+0.125,1);
            }
            
            \foreach \x in {1,...,1} {
                 \draw[thick,fill=yellow] (\x*0.25+20*0.25,0) rectangle (\x*0.25+20*0.25+0.125,1);
            }
            
            \foreach \x in {1,...,3} {
                \draw[thick,fill=cyan] (\x*0.25+21*0.25,0) rectangle (\x*0.25+21*0.25+0.125,1);
            }
            
            \foreach \x in {1,...,7} {
                \draw[thick,fill=cyan] (\x*0.25+25*0.25,0) rectangle (\x*0.25+25*0.25+0.125,1);
            }
            
            \foreach \x in {1,...,2} {
                \draw[thick,fill=cyan] (\x*0.25+33*0.25,0) rectangle (\x*0.25+33*0.25+0.125,1);
            }
            \foreach \x in {1,...,1} {
                 \draw[thick,fill=yellow] (\x*0.25+35*0.25,0) rectangle (\x*0.25+35*0.25+0.125,1);
            }
            \foreach \x in {1,...,5} {
                \draw[thick,fill=cyan] (\x*0.25+36*0.25,0) rectangle (\x*0.25+36*0.25+0.125,1);
            }
            
            \foreach \x in {1,...,5} {
                \draw[thick,fill=cyan] (\x*0.25+42*0.25,0) rectangle (\x*0.25+42*0.25+0.125,1);
            }
            \foreach \x in {1,...,1} {
                 \draw[thick,fill=yellow] (\x*0.25+47*0.25,0) rectangle (\x*0.25+47*0.25+0.125,1);
            }

                \node at (12.5,0.5) [circle,fill,inner sep=1.2pt]{};
                \node at (12.8,0.5) [circle,fill,inner sep=1.2pt]{};
                \node at (13.1,0.5) [circle,fill,inner sep=1.2pt]{};

                \draw [
                    thick,
                    decoration={
                        brace,
                        mirror,
                        raise=0.3cm
                    },
                    decorate
                ] (0.25+8*0.25,0) -- (0.25+47*0.25+0.125,0) 
                node [pos=0.5,anchor=north,yshift=-0.6cm] {sliding window}; 
            \end{tikzpicture}
            
            \caption{\prot primary rotation in a system with $n=10$ replicas. In blue, we show epochs led by primaries elected based on their performance. Epochs shown in yellow are led by replicas re-evaluated once their last turn as primary falls out of the sliding window.}
            \label{fig:phases}
        \end{figure}

        Throughout the protocol, all replicas record the performance of each primary. We measure performance as the number of requests successfully committed under a primary in an epoch. Performance can thus be determined during the succeeding epoch-change by each replica (cf.\ Section~\ref{sec:epoch-change}). To deliver a reactive system, we update a replica's primary performance after each turn.
        
        We rotate through the best $2f+1$ primaries repeatedly. After every $2f+1$ primaries, the best $2f+1$ primaries are redetermined and subsequently elected as primary in order of the time passed since their last turn as primary. The primary that has not been seen for the longest time is elected first. 
        Cycling through the best primaries maximizes system performance. Simultaneously, basing performance solely on a replica's preceding primary performance strips byzantine primaries from the ability to misuse a good reputation.
        Every so often, we interrupt the continuous exploitation of the best $2f+1$ primaries to revisit replicas that fall out of the sliding window. If replica~$u$'s last turn as primary occurred in epoch $e-g$ by the time epoch $e$ rolls around, replica~$u$ would be re-explored as primary in epoch $e$. The exploration allows us to re-evaluate all replicas as primaries periodically and ensures that \prot is reactive to network changes.
        
        Note that we start the protocol by exploring all primaries ordered by their identifiers. We would also like to point out that only one primary can fall out of the sliding window at any time after the initial exploration. Thus, we always know which primary will be re-evaluated. 
        
    \subsection{Epoch Operation}\label{sec:epoch-operation}
        To execute requests, we use a leader-based adaption of PBFT, similar to Linear-PBFT~\cite{gueta2018sbft}. Threshold signatures are commonly used to reduce the complexity of the backup prepare and commit phases of PBFT. The leader of a request is used as a collector of partial signatures to create a $(2f+1,n)$ threshold signature in the intermediate stages of the backup prepare and commit phases. We visualize the schematic of the message flow for one request led by replica~$0$ in Figure~\ref{fig:messageflow} 
        and  summarize the protocol executed locally by replicas to deliver a request proposed by leader~$v$ in Algorithm~\ref{pseudo}. 
            
        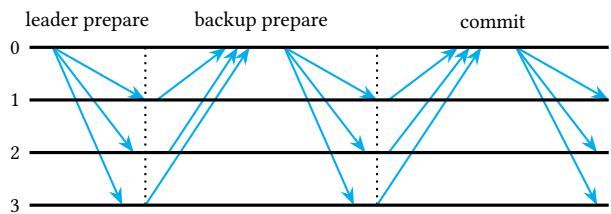
\begin{figure}[hbt!]
            \centering
            \begin{tikzpicture}[scale = 0.7]

                \foreach \c in {2,4}{
                    \draw[dotted,thick] (2.2*\c,3)--++(0,-3);
                }
                \node[] at (1.1*3,3.5) {\small leader prepare};
                \node[] at (1.1*6,3.5) {\small backup prepare};
                \node[] at (1.1*10,3.5) {\small commit};
                 	 	
                \foreach \a/\b/\c/\d in {
                    1.2/3/0.8/-1,
                    1.2/3/0.7/-2,
                    1.2/3/0.6/-3,
                    2.1/2/0.6/1,
                    2.2/1/0.6/2,
                    2.0/0/0.9/3,
                    3.2/3/0.8/-1,
                    3.2/3/0.7/-2,
                    3.2/3/0.6/-3,
                    4.1/2/0.6/1,
                    4.1/1/0.7/2,
                    4.0/0/0.9/3,
                    5.2/3/0.8/-1,
                    5.2/3/0.7/-2,
                    5.2/3/0.7/-3
                }{ \draw[cyan,thick,-{Stealth}] (2.2*\a,\b)--++(2.2*\c,\d); }
                    
                \foreach \row [count=\r] in {\small 0,\small 1,\small 2,\small 3} {
                    \draw[very thick](2.2,4-\r)node[left]{\row} -- ++(1.1*10,0);
                }

            \end{tikzpicture}
            \caption{Schematic message flow for one request.}
            \label{fig:messageflow}
        \end{figure}
        
        \paragraph{Leader prepare phase.} Upon receiving a $\mathlist{\langle \text{request}, r , t, c \rangle}$ from a client, each replica computes the hash of the client identifier~$c$. If the request falls into one of the active buckets belonging to leader~$v$, $v$ verifies $\mathlist{\langle \text{request}, r , t, c \rangle}$ from client~$c$. The request is discarded, if (a)~it has already been prepared, or (b)~it is already pending. Once verified, leader~$v$ broadcasts $\mathlist{\langle \text{pre-prepare}, sn , e, h(r), v \rangle}$, where $sn$ is the sequence number, $e$ the current epoch, $h(r)$ is the hash digest of request~$r$ and $v$ represents the leader's signature. The cryptographic hash function~$h$ maps an arbitrary-length input to a fixed-length output. We can use the digest~$h(r)$ as a unique identifier for a request~$r$, as we assume the hash function to be collision-resistant.
            
        \paragraph{Backup prepare phase.} A backup~$w$ accepts $\mathlist{\langle \text{pre-prepare}, sn , e, h(r), v \rangle}$ from leader~$v$, if (a)~the epoch number matches its local epoch number, (b)~$w$ has not prepared another request with the same sequence number~$sn$ in epoch~$e$, (c)~leader~$v$ leads sequence number~$sn$, (d)~$sn$ lies between the low and high watermarks of leader~$v$, (e)~$r$ is in the active bucket of $v$, and (f)~$r$ was submitted by an authorized client. Upon accepting $\mathlist{\langle \text{pre-prepare}, sn , e, h(r), v \rangle}$, $w$ computes $d= h(sn \| e\| r)$ where $h$ is a cryptographic hash function. Additionally, $w$ signs $d$ by computing a verifiable partial signature~$\sigma_w(d)$. Then $w$ sends $\mathlist{\langle \text{prepare}, sn , e, \sigma_w(d) \rangle}$ to leader~$v$. Upon receiving $2f$ prepare messages for $sn$ in epoch~$e$, leader~$v$ forms a combined signature~$\sigma (d)$ from the $2f$ prepare messages and its own signature. Leader~$v$ then broadcasts $\mathlist{\langle \text{prepared-certificate}, sn , e, \sigma(d) \rangle}$ to all backups.

        \paragraph{Commit phase.} Backup~$w$ accepts the \textit{prepared-certificate} and replies $\mathlist{\langle \text{commit},\allowbreak sn ,\allowbreak e,\allowbreak \sigma _w( \sigma(d) )\rangle}$ to leader~$v$. After collecting $2f$ commit messages, leader~$v$ creates a combined signature~$\sigma (\sigma (d))$ using the signatures from the collected commit messages and its own signature. Once the combined signature is prepared, $v$ continues by broadcasting $\mathlist{\langle\text{commit-certificate}, sn , e, \sigma( \sigma(d) )\rangle}$. Upon receiving the \textit{commit-certificate}, replicas execute $r$ after delivering all preceding requests led by $v$, and send replies to the client.
        
\subsection{Checkpointing}\label{sec:checkpoint}
        
        Similar to  PBFT~\cite{castro2002practical}, we periodically create checkpoints to prove the correctness of the current state. Instead of requiring a costly round of all-to-all communication to create a checkpoint, we add an intermediate phase and let the respective leader collect partial signatures to generate a certificate optimistically. Additionally, we expand the PBFT checkpoint protocol to run for $n$ parallel leaders. 
        
        For each leader~$v$, we repeatedly create checkpoints to clear the logs and advance the watermarks of leader~$v$ whenever the local sequence number~$sn_{v,e,k}$ is divisible by a constant $k\in \mathbb{Z}^+$. Recall that when a replica~$u$ delivers a request for leader~$v$ with local sequence number~$sn_{v,e,k}$, this implies that all requests led by $v$ with local sequence number lower than $sn_{v,e,k}$ have been locally committed at replica~$u$. Hence, after delivering the request with local sequence number~$sn_{v,e,k}$, replica~$u$ sends $\mathlist{\langle \text{checkpoint}, sn_{v,e,k}, h(sn'_{v,e,k}), u\rangle}$ to leader~$v$. Here, $sn'_{v,e,k}$ is the last checkpoint and $h(sn'_{v,e,k})$ is the hash digest of the requests with sequence number~$sn_v$ in the range $sn'_{v,e,k}\leq sn_v\leq sn_{v,e,k}$. Leader~$v$ proceeds by collecting $2f+1$ checkpoint messages (including its own) and generates a \textit{checkpoint-certificate} by creating a combined threshold signature. 
        Then, leader $v$ sends the checkpoint-certificate to all other replicas. If a replica sees the checkpoint-certificate, the checkpoint is \textit{stable} and the replica can discard the corresponding messages from its logs, i.e., for sequence numbers belonging to leader~$v$ lower than $sn_{v,e,k}$.
        
        We use checkpointing to advance low and high watermarks. In doing so, we allow several requests from a leader to be in flight. The low watermark~$L_v$ for leader~$v$ is equal to the sequence number of the last stable checkpoint, and the high watermark is $H_v=L_v+2k$. We set $k$ to be large enough such that replicas do not stall. Given its watermarks, leader~$v$ can only propose requests with a local sequence number between low and high watermarks.

 
\begin{figure}[t]
    \vspace{-0.5cm}
    \begin{minipage}[t]{0.48\textwidth}    
        \begin{algorithm}[H]
            \centering
            \caption{Committing a request proposed by leader~$v$}\label{pseudo}
            \begin{algorithmic}
                \State \hspace{-\algorithmicindent}\textit{Leader prepare phase}
                \State \textbf{as} replica~$u$: 
                \State \hspace{\algorithmicindent}\textbf{upon} receiving a valid  $\mathlist{\langle \text{request}, r , t, c \rangle}$ from client~$c$:  
                \State \hspace{\algorithmicindent}\hspace{\algorithmicindent} map client request to hash bucket
                \State \textbf{as} leader~$v$: 
                \State \hspace{\algorithmicindent}accept $\mathlist{\langle \text{request}, r , t, c \rangle}$ assigned to one of $v$'s buckets 
                \State \hspace{\algorithmicindent}pick next assigned sequence number~$sn$
                \State \hspace{\algorithmicindent}broadcast $\mathlist{\langle \text{pre-prepare}, sn , e, h(r), v \rangle}$ 
                \State \hspace{-\algorithmicindent}\textit{Backup prepare phase}
                \State \textbf{as} backup~$w$:
                \State \hspace{\algorithmicindent}accept $\mathlist{\langle \text{pre-prepare}, sn , e, h(r), v \rangle}$ 
                \State \hspace{\algorithmicindent}\textbf{if} the pre-prepare message is valid: 
                \State \hspace{\algorithmicindent}\hspace{\algorithmicindent}compute partial signature~$\sigma_w(d)$
                \State \hspace{\algorithmicindent}\hspace{\algorithmicindent}send $\mathlist{\langle \text{prepare}, sn , e, \sigma_w(d) \rangle}$ to leader~$v$
                \State \textbf{as} leader~$v$:
                \State \hspace{\algorithmicindent}compute partial signature~$\sigma_v(d)$
                \State \hspace{\algorithmicindent}\textbf{upon} receiving $2f$ prepare messages: 
                \State \hspace{\algorithmicindent}\hspace{\algorithmicindent}compute $(2f+1,n)$ threshold signature~$\sigma(d)$
                \State \hspace{\algorithmicindent}\hspace{\algorithmicindent}broadcast $\mathlist{\langle \text{prepared-certificate}, sn , e, \sigma(d) \rangle}$ 
                \State \hspace{-\algorithmicindent}\textit{Commit phase}
                \State \textbf{as} backup~$w$:
                \State \hspace{\algorithmicindent}accept $\mathlist{\langle \text{prepared-certificate}, sn , e, \sigma(d) \rangle}$
                \State \hspace{\algorithmicindent}compute partial signature~$\sigma(\sigma_w(d))$
                \State \hspace{\algorithmicindent}$\mathlist{\langle \text{commit}, sn , e, \sigma _w( \sigma(d) )\rangle}$ to leader~$v$ 
                \State \textbf{as} leader~$v$:
                \State \hspace{\algorithmicindent}compute partial signature~$\sigma(\sigma_v(d))$
                \State \hspace{\algorithmicindent}\textbf{upon} receiving $2f$ commit messages: 
                \State \hspace{\algorithmicindent}\hspace{\algorithmicindent}compute $(2f+1,n)$ threshold signature~$\sigma(\sigma(d))$
                \State \hspace{\algorithmicindent}\hspace{\algorithmicindent}broadcast $\mathlist{\langle \text{commit-certificate}, sn , e, \sigma( \sigma(d) )\rangle}$ 
            \end{algorithmic}
        \end{algorithm}
    \end{minipage}
    \hfill
    \begin{minipage}[t]{0.48\textwidth} 
         \begin{algorithm}[H]
            \caption{Epoch-change protocol for epoch~$e+1$}\label{epoch-changeproto}
            \begin{algorithmic}
                \State \hspace{-\algorithmicindent}\textit{Starting epoch-change}
                \State \textbf{as} replica~$u$: 
                \State \hspace{\algorithmicindent}broadcast $\mathlist{\langle \text{epoch-change}, e+1, \mathcal{S}, \mathcal{C}, \mathcal{P}, \mathcal{Q},u \rangle}$
                \State \hspace{\algorithmicindent}\parbox[t]{0.9\textwidth}{\textbf{upon} receiving $2f$  epoch-change messages for $e+1$:}
                \State \hspace{\algorithmicindent}\hspace{\algorithmicindent}start epoch-change timer~$T_e$
                \State \hspace{-\algorithmicindent}\textit{Reliable broadcast}
                \State \textbf{as} primary $p_{e+1}$:
                \State \hspace{\algorithmicindent}compute $C_v(e+1)$ for all leaders $v\in [n]$
                \State \hspace{\algorithmicindent}\parbox[t]{0.9\textwidth}{perform 3-phase reliable broadcast sharing configuration details of epoch~$e+1$ and the performance of  primary $p_{e}$}
                \vspace{1mm}
                \State \textbf{as} replica~$u$:
                \State \hspace{\algorithmicindent}participate in reliable broadcast initiates by $p_{e+1}$
                \State \hspace{-\algorithmicindent}\textit{Starting epoch}
                \State \textbf{as} primary $p_{e+1}$:
                \State \hspace{\algorithmicindent}broadcast $\mathlist{\langle \text{new-epoch},e+1,\mathcal{V},\mathcal{O},\allowbreak p_{e+1}\rangle}$
                \State \hspace{\algorithmicindent}enter epoch~$e+1$
                \State \textbf{as} replica~$u$:
                \State \hspace{\algorithmicindent}accept $\mathlist{\langle \text{new-epoch},e+1,\mathcal{V},\mathcal{O},\allowbreak p_{e+1}\rangle}$
                \State \hspace{\algorithmicindent}enter epoch~$e+1$ 
            \end{algorithmic}
        \end{algorithm}
        \vspace{-2.1mm}
        \begin{algorithm}[H]
            \centering
            \caption{Configuration adjustment}\label{multipleleadersnotimeout}
            \begin{algorithmic}
                \State \hspace{-\algorithmicindent}initially $C_v(1)=C_{\min}$ for all replicas $v$ 
                \State \textbf{if} $c_v(e)< C_v(e)$
                \State $C_v(e+1) = \max\left(C_{\min},\max_{i\in \{0, \dots, f\}} \left(c_v(e-i)\right)\right)$
                \State \textbf{else}
                \State $C_v(e+1) = 2\cdot c_v(e)$
            \end{algorithmic}
        \end{algorithm}
    \end{minipage}
\end{figure}

    \subsection{Epoch-Change}\label{sec:epoch-change}
    
        At a high level, we modify the PBFT epoch-change protocol as follows: we use threshold signatures to reduce the message complexity and extend the epoch-change message to include information about all leaders. Similarly to Mir~\cite{stathakopoulou2019mir}, we introduce a round of reliable broadcast to share information needed to determine the configuration of the next epoch(s). In particular, we determine the load assigned to each leader in the next epoch, based on their past performance. We also record the performance of the preceding primary. An overview of the epoch-change protocol can be found in Algorithm~\ref{epoch-changeproto}, while a detailed description follows.

        \paragraph{Calling epoch-change.} Replicas call an epoch-change by broadcasting an epoch-change message in four  cases:
        \begin{enumerate}
            \item Replica~$u$ triggers an epoch-change in epoch~$e$, once it has committed everyone's assigned requests locally.
            \item Replica~$u$ calls for an epoch-change when its \textit{epoch timer} expires. The value of the epoch timer~$T$ is set to ensure that after GST, correct replicas can finish at least $C_{\min}$ requests during an epoch. $C_{\min}\in \Omega(n^2)$ is the minimum number of requests assigned to leaders. 
            \item Replicas call epoch-changes upon observing inadequate progress. Each replica~$u$ has individual \textit{no-progress timers} for all leaders. The no-progress timer is initialized with the same value $T_p$ for all leaders. Initially, replicas set all no-progress timers for the first time after $5\Delta$ in the epoch -- accounting for the message transmission time of the initial requests. A replica resets the timer for leader~$v$ every time it receives a commit-certificate from $v$. In case the replica has already committed $C_v$ requests for leader~$v$, the timer is no longer reset. Upon observing no progress timeouts for $b  \in [f+1,2f+1]$ different leaders, a replica calls an epoch-change. Requiring at least $f+1$ leaders to make progress ensures that a constant fraction of leaders makes progress, and at least one correct leader is involved. On the other hand, we demand no more than $2f+1$ leaders to make progress such that byzantine leaders failing to execute requests cannot stop the epoch early. We let $b=2f+1$ and set the no-progress timer such that it does not expire for correct leaders and simultaneously ensures sufficient progress, i.e., $T_p\in \Theta(T/C_{\min})$.
            \item Finally, replica~$u$ calls an epoch-change if it sees that $f+1$ other replicas have called an epoch-change for an epoch higher than $e$. Then, replica $u$ picks the smallest epoch in the set such that byzantine replicas cannot advance the protocol an arbitrary number of epochs. 
        \end{enumerate}
        After sending an epoch-change message, the replica will only start its epoch-change timer, once it saw at least $2f+1$ epoch-change messages. We will discuss the epoch-change timer in more detail later.
        
        \paragraph{Starting epoch-change (Algorithm~\ref{epoch-changeproto}, steps 1-5).}
        To move the system to epoch~$e+1$, replica~$u$ sends $\mathlist{\langle \text{epoch-change}, e+1, \mathcal{S}, \mathcal{C}, \mathcal{P}, \mathcal{Q},u \rangle}$ to all replicas in the system. Here, $\mathcal{S}$ is a vector of sequence numbers $sn_v$ of the last stable checkpoints~$S_v$ $\forall v \in [n]$ known to $u$ for each leader~$v$. $\mathcal{C}$ is a set of checkpoint-certificates proving the correctness of $S_v$ $\forall v \in [n]$, while $\mathcal{P}$ contains sets $\mathcal{P}_v$ $\forall v \in [n]$. For each leader~$v$, $\mathcal{P}_v$ contains a prepared-certificate for each request~$r$ that was prepared at $u$ with sequence number higher than $sn_v$, if replica~$v$ does not possess a commit-certificate for $r$. Similarly, $\mathcal{Q}$ contains sets $\mathcal{Q}_v$ $\forall v \in [n]$. $\mathcal{Q}_v$ consists of a commit-certificate for each request~$r$ that was prepared at $u$ with sequence number higher than $sn_v$.
        
        \paragraph{Reliable broadcast (Algorithm~\ref{epoch-changeproto}, steps 6-11).}    
        The primary of epoch~$e+1$ ($p_{e+1}$) waits for $2f$ epoch-change messages for epoch~$e$. Upon receiving a sufficient number of messages, the primary performs a classical 3-phase reliable broadcast. During the broadcast, the primary informs leaders on the number of requests assigned to each leader in the next epoch and the identifiers of the replicas which send epoch-change messages. The number of requests assigned to a leader is computed deterministically (Algorithm~\ref{multipleleadersnotimeout}). Through the reliable broadcast, we ensure that the primary cannot share conflicting information regarding the sequence number assignment and, in turn, the next epoch's sequence number distribution. In addition to sharing information about the epoch configuration, the primary also broadcasts the total number of requests committed during the previous epoch. This information is used by the network to evaluate primary performance and determine epoch primaries. 
        
        \paragraph{Starting epoch (Algorithm~\ref{epoch-changeproto}, steps 12-18).}
        The primary~$p_{e+1}$ multicasts $\mathlist{\langle \text{new-epoch},e+1,\mathcal{V},\mathcal{O},\allowbreak p_{e+1}\rangle}$. Here, the set~$\mathcal{V}$ contains sets $\mathcal{V}_u$, which carry the valid epoch-change messages of each replica~$u$ of epoch~$e$ received by the primary of epoch~$e+1$, plus the epoch-change message the primary of epoch~$e+1$ would have sent. $\mathcal{O}$ consists of sets $\mathcal{O}_v$ $\forall v \in [n]$ containing pre-prepare messages and commit-certificates.
        
        $\mathcal{O}_v$ is computed as follows. First, the primary determines the sequence number~$S_{\min}(v)$ of the latest stable checkpoint in $\mathcal{V}$ and the highest sequence number~$S_{\max}(v)$ in a prepare message in $\mathcal{V}$. For each sequence number~$sn_v$ between $S_{\min}(v)$ and $S_{\max}(v)$ of all leaders $v\in[n]$ there are three cases: (a)~there is at least one set in $\mathcal{Q}_v$ of some epoch-change message in $\mathcal{V}$ with sequence number~$sn_v$, (b)~there is at least one set in $\mathcal{P}_v$ of some epoch-change message in $\mathcal{V}$ with sequence number~$sn_v$ and none in  $\mathcal{Q}_v$, or (c)~there is no such set. In the first case, the primary simply prepares a commit-certificate it received for $sn_v$. In the second case, the primary creates a new message $\mathlist{\langle \text{pre-prepare},  sn_v, e+1, d,p_{e+1}\rangle}$, where $d$ is the request digest in the pre-prepare message for sequence number~$sn_v$ with the highest epoch number in $\mathcal{V}$. In the third case, the primary creates a new pre-prepare message $\mathlist{\langle \text{pre-prepare}, sn_v,e+1,  d^{null},p_{e+1}\rangle}$, where $d^{null}$ is the digest of a special null request; a  null request goes through the protocol like  other requests, but its execution is a no-op. If there is a gap between $S_{\max}(v)$ and the last sequence number assigned to leader~$v$ in epoch~$e$, these sequence numbers will be newly assigned in the next epoch.
            
        Next, the primary appends the messages in $\mathcal{O}$ to its log. If $S_{\min}(v)$ is greater than the sequence number of its latest stable checkpoint, the primary also inserts the proof of stability (the checkpoint with sequence number~$S_{\min}(v)$) in its log. Then it enters epoch~$e+1$; at this point, it can accept messages for epoch~$e+1$.
            
        A replica accepts a new-epoch message for epoch~$e+1$ if: (a)~it is signed properly, (b)~the epoch-change messages it contains are valid for epoch~$e+1$,~(c) the information in $\mathcal{V}$ matches the new request assignment, and (d)~the set~$\mathcal{O}$ is correct. The replica verifies the correctness of $\mathcal{O}$ by performing a computation similar to the one previously used by the primary. Then, the replica adds the new information contained in $\mathcal{O}$ to its log and decides all requests for which a commit-certificate was sent. Replicas rerun the protocol for messages between $S_{\min}(v)$ and $S_{\max}(v)$ without a commit-certificate. They do not execute client requests again (they use their stored information about the last reply sent to each client instead). As request messages and stable checkpoints are not included in new-epoch messages, a replica might not have some of them available. In this case, the replica can easily obtain the missing information from other replicas in the system.
        
        \paragraph{Hanging requests.}
        While the primary sends out the pre-prepare message for all hanging requests, replicas in whose buckets the requests fall, are responsible for computing prepared- and commit-certificates of the individual requests. 
        In the example shown in Figure~\ref{fig:messageflowepochchange}, the primary of epoch~$e+1$, replica~$0$, sends a pre-prepare message for a request in a bucket of replica~$1$, contained in the new-epoch message, to everyone. Replica~$1$ is then responsible for prepared- and commit-certificates, as well as collecting the corresponding partial signatures.

        The number of request~$C_v(e+1)$ assigned to leader~$v$ in epoch~$e+1$ is determined deterministically based on its past performance (Algorithm~\ref{multipleleadersnotimeout}). By $c_v(e)$ we denote the number of requests committed under leader $v$ in epoch $e$. Each leader is re-evaluated during the epoch-change. If a leader successfully committed all assigned requests in the preceding epoch, we double the number of requests this leader is given in the following epoch. Else, it is assigned the maximum number of requests it committed within the last $f+1$ epochs.
        
        \begin{figure}
            \centering
            \begin{tikzpicture}[scale =0.7]

            	\foreach \c in {2,4}{
              		\draw[dotted,thick] (2.2*\c,3)--++(0,-3);
             	}
             	 	
             	\node[] at (1.1*3,3.5) {\small leader prepare};
             	\node[] at (1.1*6,3.5) {\small backup prepare};
             	\node[] at (1.1*10,3.5) {\small commit};
             	 	
             	\foreach \a/\b/\c/\d in {
                    1.2/3/0.8/-1,
                    1.2/3/0.7/-2,
                    1.2/3/0.6/-3,
                    2.1/3/0.8/-1,
                    2.2/1/0.6/1,
                    2.0/0/0.9/2,
                    3.2/2/0.8/-1,
                    3.2/2/0.7/-2,
                    3.2/2/0.6/1,
                    4.1/3/0.8/-1,
                    4.1/1/0.7/1,
                    4.0/0/0.9/2,
                    5.2/2/0.8/-1,
                    5.2/2/0.7/-2,
                    5.2/2/0.7/1
                }{ \draw[cyan,thick,-{Stealth}] (2.2*\a,\b)--++(2.2*\c,\d); }
                
            	\foreach \row [count=\r] in {\small 0,\small 1,\small 2,\small 3} {
            		\draw[very thick](2*1.1,5-\r-1)node[left]{\row} -- ++(1.1*10,0);
              	}	
            \end{tikzpicture}
            \caption{Schematic of message flow for hanging requests. In this example, the primary is replica $0$, and the request falls into the bucket of replica $1$.}
            \label{fig:messageflowepochchange}
            \vspace{-0.5cm}
        \end{figure}
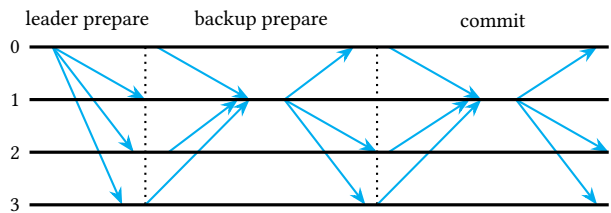

        \paragraph{Epoch-change timer.}\label{sec:timer}
            A replica sets an epoch-change timer~$T_{e}$ upon entering the epoch-change for epoch~$e+1$. By default, we configure the epoch-change timer~$T_{e}$ such that a correct primary can successfully finish the epoch-change after GST. If the timer expires without seeing a valid new-epoch message, the replica requests an epoch-change for epoch~$e+2$. If a replica has experienced at least $f$ unsuccessful consecutive epoch-changes previously, the replica doubles the timer's value. It continues to do so until it sees a valid new-epoch message. We only start doubling the timer after $f$ unsuccessful consecutive epoch-changes to avoid having $f$ byzantine primaries in a row, i.e., the maximum number of subsequent byzantine primaries possible, purposely increasing the timer value exponentially and, in turn, decreasing the system throughput significantly. As soon as replicas witness a successful epoch-change, they reduce $T_e$ to its default again.
      
        \paragraph{Assignment of requests}\label{sec:update}
            Finally, the number of requests assigned to each leader is updated during the epoch-change. We limit the number of requests that can be processed by each leader per epoch to assign the sequence numbers ahead of time and allow leaders to work independently of each other. 
        %
        %
            We assign sequence numbers to leaders according to their abilities. As soon as we see a leader outperforming their workload, we double the number of requests they are assigned in the following epoch. Additionally, leaders operating below their expected capabilities are allocated requests according to the highest potential demonstrated in the past $f+1$ rounds. By looking at the previous $f+1$ epochs, we ensure that there is at least one epoch with a correct primary in the leader set. In this epoch, the leader had the chance to display its capabilities. Thus, basing a leader's performance on the last $f+1$ rounds allows us to see its ability independent of the possible influence of byzantine primaries.

\section{Analysis}\label{sec:analysis}
    We show that \prot satisfies the properties specified in Section~\ref{sec:goals}.
    A detailed analysis can be found in Appendix~\ref{app:proofs}.
    \paragraph{Safety.}
    We prove \prot is safe under asynchrony. 
    \prot generalizes Linear-PBFT~\cite{gueta2018sbft}, which is an adaptation of PBFT~\cite{castro2002practical} that reduces its authenticator complexity during epoch operation. We thus rely on similar arguments to prove \prot's safety in Theorem~\ref{safety}.
    
    \paragraph{Liveness.}
    We show \prot makes progress after GST (Theorem~\ref{thm:liveness}).
    \prot's epoch-change uses the following techniques to ensure that correct replicas become synchronized (Definition~\ref{def:syn}) after GST:
    (1)~A replica in epoch~$e$ observing epoch-change messages from $f+1$ other replicas calling for any epoch(s) greater than $e$ issues an epoch-change message for the smallest such epoch.
    (2)~A replica only starts its epoch-change timer after receiving $2f$ other epoch-change messages, thus ensuring that at least $f + 1$ correct replicas have broadcasted an epoch-change message for the epoch (or higher). Hence, all correct replicas start their epoch-change timer for an epoch~$e'$ within at most $2$ message delay. After GST, this amounts to at most $2 \Delta$.
    (3)~Byzantine replicas are unable to impede progress by calling frequent epoch-changes, as an epoch-change will only happen if at least $f+1$ replicas call it. A byzantine primary can hinder the epoch-change from being successful. However, there can only be $f$ byzantine primaries in a row. 
        
    
    \paragraph{Efficiency.}
    To demonstrate that \prot is efficient, we start by analyzing the authenticator complexity for reaching consensus during an epoch. Like Linear-PBFT~\cite{gueta2018sbft}, using each leader as a collector for partial signatures in the backup prepare and commit phase, allows \prot to achieve linear complexity during epoch operation.
    We continue by calculating the authenticator complexity of an epoch-change. Intuitively speaking, we reduce PBFT's view-change complexity from $\Theta(n^3)$ to $\Theta(n^2)$ by employing threshold signatures.
    However, as \prot allows for $n$ simultaneous leaders, we obtain an authenticator complexity of $\Theta(n^3)$ as a consequence of sharing the same information for $n$ leaders during the epoch-change.
    Finally, we argue that after GST, there is sufficient progress by correct replicas to compensate for the high epoch-change cost (Theorem~\ref{amormult}).
    
    \paragraph{Optimistic Performance.}
    We assess \prot's optimistic performance, i.e., we theoretically evaluate its best-case throughput, assuming all replicas are correct and the network is synchronous.
    We further assume that the best-case throughput is limited by the available computing power of each replica -- mainly required for the computation and verification of cryptographic signatures -- and that the available computing power of each correct replica is the same. 
    
    In this model, we demonstrate that \prot achieves higher throughput than sequential-leader protocols by the means of leader parallelization.
    To show the speed-up gained through parallelization, we first analyze the \textit{optimistic epoch throughput} of \prot, i.e., the throughput of the system during stable networking conditions in the best-case scenario with $3f + 1$ correct replicas (Lemma~\ref{optimalthroughput}). 
    Later, we consider the repeated epoch changes and show that \prot's throughput is dominated by its authenticator complexity during the epochs. To that end, observe that for $C_{\min} \in \Omega(n^2)$, every epoch will incur an authenticator complexity of $\Omega(n^3)$ per replica and thus require $\Omega(n^3)$ time units.
    We show that after GST, an epoch-change under a correct primary requires $\Theta(n^2)$ time units (Lemma~\ref{lem:epoch-change-correct-primary}).
    We conclude our analysis by quantifying \prot's overall best-case throughput. Specifically, we prove that the speed-up gained by moving from a sequential-leader protocol to a parallel-leader protocol is proportional to the number of leaders (Theorem~\ref{speedup}).

    \paragraph{Byzantine-Resilient Performance.}
    While many BFT protocols present practical evaluations of their performance that  ignore byzantine adversarial behavior~\cite{castro2002practical,gueta2018sbft,yin2019hotstuff,stathakopoulou2019mir}, we provide a novel, theoretical byzantine-resilience guarantee. 
    %
    We first analyze the impact of byzantine replicas in an epoch under a correct primary. 
    We consider the replicas' roles as backups and leaders separately.
    On the one hand, for a byzantine leader, the optimal strategy is to leave as many requests hanging, while not making any progress (Lemma~\ref{lem:byzantine-leader}).
    On the other hand, as a backup, the optimal byzantine strategy is not helping other leaders to make progress (Lemma~\ref{lem:byzantine-backup}).
    In conclusion, we observe that byzantine replicas have little opportunity to reduce the throughput in epochs under a correct primary. Specifically, we show that after GST, the effective utilization under a correct primary is at least $\frac{8}{9}$ for $n \to \infty$ (Theorem~\ref{thm:byz-throughput-correct-primary}).

    Next, we discuss the potential strategies of a byzantine primary trying to stall the system. We fist show that under a byzantine primary, an epoch is either aborted quickly or $\Omega(n^3)$ new requests become committed (Lemma~\ref{byzprimary}).
    Then, we prove that  rotating primaries across epochs based on primary performance history reduces the control of the byzantine adversary on the system. In particular, byzantine primaries only have one turn as primary throughout any sliding window in a stable network.
    Combining all the above, we conclude that \prot's byzantine-resilient utilization is asymptotically $\frac{8}{9} \cdot \frac{g-f}{g}> \frac{16}{27}$ for $n \to \infty$ (Theorem~\ref{worstcase}), where $g$ is the fraction of byzantine primaries in the system's stable state, while simultaneously dictates how long it takes to get there after GST.

\section{Evaluation}\label{sec:eval}
In this section, we evaluate a prototype implementation of \prot with respect to performance, i.e., throughput and latency. We compare \prot to the state-of-the-art BFT protocol HotStuff~\cite{yin2019hotstuff}. 

\paragraph{Implementation.}
Our proof-of-concept implementation is \textit{directly based} on libhotstuff,\footnote{\url{https://github.com/hot-stuff/libhotstuff}} changing roughly 2000 lines of code while maintaining the general framework and experiment setup.
We implemented the basic functionality of our protocol as well as the epoch-change and watermarks. 
To adapt to our protocol while fairly comparing with HotStuff using the code from libhotstuff, we extended both implementations to support BLS threshold signatures.\footnote{\url{https://github.com/herumi/bls}}
Implementation details can be found in Appendix~\ref{app:implementation}.\footnote{We intend to open source the code.}

\paragraph{Setup.}
We evaluate \prot on Amazon EC2 using c5.4xlarge AWS cloud instances.
We run each replica on a single VM instance.
Each instance has 16 CPU cores powered by Intel Xeon Platinum 8000 processors clocked at up to 3.6 GHz, 32 GiB of RAM, and a maximum available TCP bandwidth of 10 Gigabits per second.

\paragraph{Methodology.}
We measure and report the average throughput and latency when all $n$ replicas operate correctly. 
We run the experiments for $n \in \{4, 7, 10, 16, 31, 61\}$. 

Our evaluation includes the epoch-change, so we measure the average throughput and latency over multiple epochs.
Specifically, for each experiment, we run both protocols over a period of three minutes and measure the average performance of \prot and HotStuff accordingly. We believe this is representative of \prot's performance as the prototype implementation is deterministic and therefore the difference between independent executions lies in the ordering of messages, which does neither affect the average throughput nor the average latency.

In our experiments, we divide the hash space into $n$ buckets resulting in one bucket per replica. Enough clients ($2n$) are instantiated such that no buckets are idle to accurately measure throughput.
For the latency measurement, we run a single client instance such that the system does not reach its throughput limit.
Each client initially generates and broadcasts 4 requests in parallel and issues a new request whenever one of its requests is answered.
Specifically, we run the libhotstuff client with its default settings, meaning that the payload of each request is 0.

As in the theoretical analysis, we do not employ a batching process for the evaluation. Hence, each block contains a single client request. For this reason, the expected throughput in practical deployments is much higher. Nevertheless, even with larger block size (e.g., 500 requests per block) we expect the bottleneck for \prot's performance to be the computation (authenticator complexity) as shown in our analysis, which is captured in our evaluation. To guarantee a fair comparison, we use the same settings for both HotStuff and \prot. We also distinguish the performance of the multi-threaded HotStuff (with 12 threads per replica) and the single-threaded one. Our implementation of \prot is currently single-threaded only but -- as we show below -- still outperforms multi-threaded HotStuff.

Table~\ref{tab:params} summarizes the configuration parameters for all experiments. When applicable, equal settings were chosen for HotStuff.

\begin{table}[hbt]
    \centering
        \begin{tabular}{lc}
            \toprule
            Configuration Parameter             &              Setting                           \\
            \midrule
            Requests per block                  &                $1$                             \\
            Threads per replica                 &                $1$                             \\
            Threads per client                  &                $4$                             \\
            Epoch timeout                       &               $30$s                            \\
            No progress timeout                 &                $2$s                            \\
            Blocks per checkpoint ($K$)         &                $50$                            \\
            Watermark window size ($2*K$)       &                $100$                           \\
            Initial epoch watermark bounds      &               $10000$                          \\
            \bottomrule
        \end{tabular}%
    \caption{%
        \prot configuration parameters used across all experiments.
    }
    \label{tab:params}
\end{table}
\vspace{-20pt}

\paragraph{Performance.}
Figure~\ref{fig:throughput} depicts a standard operation of \prot over 5 epochs and demonstrates its high throughput even when the batch size is one. As expected, the performance and throughput of our protocol stalls during an epoch-change.
However, the average throughput still remains significantly higher than HotStuff's throughput, especially when replicas increase, as illustrated in Figure~\ref{fig:average-throughput}. Specifically, \prot is at least 1.1$\times$ faster than multi-treaded HotStuff and at least 1.5$\times$ faster than single-threaded HotStuff. Figure~\ref{fig:average-latency} depicts the average latency of both \prot and HotStuff, showing that they scale similarly with the number of replicas. 
As latency expresses the time between a request being issued and committed, both protocols exhibit very fast finality for requests on average, even with many replicas.
In combination, Figure~\ref{fig:average-throughput} and Figure~\ref{fig:average-latency} demonstrate the significant scaling capabilities of \prot, and its competitiveness against state-of-the-art.

\begin{figure}[hb]
    \centering
    \includegraphics[width=\linewidth]{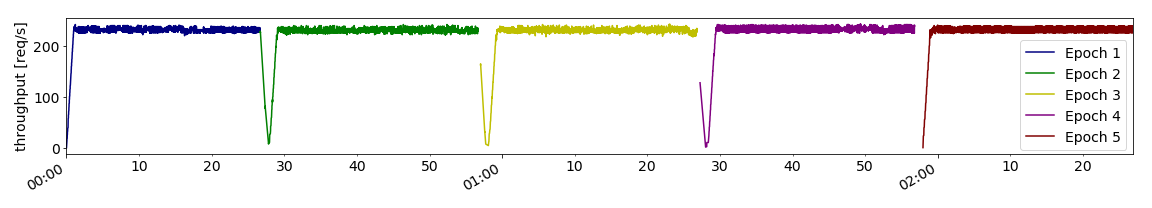}
    \vspace{-1cm}
    \caption{Throughput of \prot with $n = 4$ replicas over 5 epochs.}
    \label{fig:throughput}
    \begin{minipage}{0.5\textwidth}
        \centering
        \includegraphics[width=\linewidth]{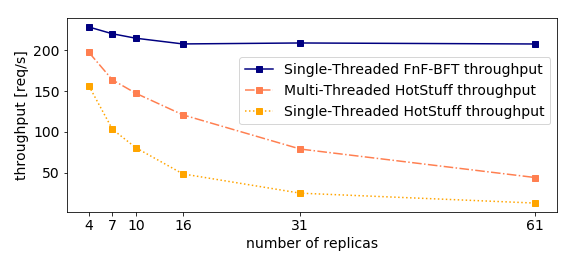}
        \vspace{-1cm}
        \caption{Average Throughput Comparison}
        \label{fig:average-throughput}
    \end{minipage}\hfill
    \begin{minipage}{0.5\textwidth}
        \centering
        \includegraphics[width=\linewidth]{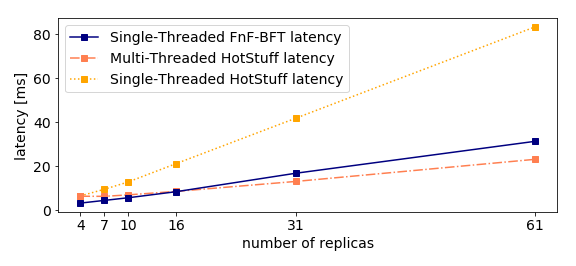}
        \vspace{-1cm}
        \caption{Average Latency Comparison}
        \label{fig:average-latency}
    \end{minipage}
\end{figure}

    
\newpage    
\bibliographystyle{ACM-Reference-Format}
\bibliography{citations}

\appendix

\section{Analysis}\label{app:proofs}

\subsection{Safety}
        \prot generalizes Linear-PBFT~\cite{gueta2018sbft}, which is an adaptation of PBFT~\cite{castro2002practical} that reduces its authenticator complexity during epoch operation. We thus rely on similar arguments to prove \prot's safety in Theorem~\ref{safety}.
        \begin{theorem}\label{safety}
            If any two correct replicas commit a request with the same sequence number, they both commit the same request.
        \end{theorem}
        \begin{proof}
            We start by showing that if  $\mathlist{\langle \text{prepared-certificate}, sn , e, \sigma(d) \rangle}$ exists, then $\mathlist{\langle \text{prepared-certificate}, sn , e, \sigma(d') \rangle}$ cannot exist for $d'\neq d$. Here, $d= h(sn \| e\| r)$ and $d'= h(sn \| e\| r')$. Further, we assume the probability of $r\neq r'$ and $d=d'$ to be negligible. The existence of $\mathlist{\langle \text{prepared-certificate}, sn , e, \sigma(d) \rangle}$ implies that at least $f+1$ correct replicas sent a pre-prepare message or a prepare message for the request~$r$ with digest~$d$ in epoch~$e$ with sequence number~$sn$. For $\mathlist{\langle \text{prepared-certificate}, sn , e, \sigma(d') \rangle}$ to exist, at least one of these correct replicas needs to have sent two conflicting prepare messages (pre-prepare messages in case it leads $sn$). This is a contradiction.
           
            Through the epoch-change protocol we further ensure that correct replicas agree on the sequence of requests that are committed locally in different epochs. The existence of $\mathlist{\langle \text{prepared-certificate}, sn , e, \sigma(d) \rangle}$ implies that $\mathlist{\langle \text{prepared-certificate}, sn , e', \sigma(d') \rangle}$ cannot exist for $d'\neq d$ and $e'>e$. Any correct replica only commits a request with sequence number~$sn$ in epoch~$e$ if it saw the corresponding commit-certificate. For a commit-certificate for request~$r$ with digest~$d$ and sequence number~$sn$ to exist a set~$R_1$ of at least $f+1$ correct replicas needs to have seen $\mathlist{\langle \text{prepared-certificate}, sn , e, \sigma(d) \rangle}$. A correct replica will only accept a pre-prepare message for epoch~$e'>e$ after having received a new-epoch message for epoch~$e'$. Any correct new-epoch message for epoch~$e'>e$ must contain epoch-change messages from a set~$R_2$ of at least $f+1$ correct replicas. As there are $2f+1$ correct replicas, $R_1$ and $R_2$ intersect in at least one correct replica~$u$. Replica~$u$'s epoch-change message ensures that information about request~$r$ being prepared in epoch~$e$ is propagated to subsequent epochs, unless $sn$ is already included in the stable checkpoint of its leader. In case the prepared-certificate is propagated to the subsequent epoch, a commit-certificate will potentially be propagated as well. If the new-epoch message only includes the prepared-certificate for $sn$, the protocol is redone for request~$r$ with the same sequence number~$sn$. In the two other cases, the replicas commit $sn$ locally upon seeing the new-epoch message and a correct replica will never accept a request with sequence number~$sn$ again.
        \end{proof}
        
    \subsection{Liveness}
        One cannot guarantee safety and liveness for deterministic BFT protocols in asynchrony~\cite{fischer1985impossibility}. We will, therefore, show that \prot eventually makes progress after GST. In other words, we consider a stable network when discussing liveness. Furthermore, we assume that after an extended period without progress, the time required for local computation in an epoch-change is negligible. Thus, we focus on analyzing the network delays for liveness.
        
        Similar to PBFT~\cite{castro2002practical}, \prot's epoch-change uses the following three techniques to ensure that correct replicas become synchronized (Definition~\ref{def:syn}) after GST.
        \begin{enumerate}
            \item \label{tec:broadcast-upon-f+1} A replica in epoch~$e$ observing epoch-change messages from $f+1$ other replicas calling for any epoch(s) greater than $e$ issues an epoch-change message for the smallest such epoch.
            \item \label{tec:start-view-change-timer} A replica only starts its epoch-change timer after receiving $2f$ other epoch-change messages, thus ensuring that at least $f + 1$ correct replicas have broadcasted an epoch-change message for the epoch (or higher). Hence, all correct replicas start their epoch-change timer for an epoch~$e'$ within at most $2$ message delay. After GST, this amounts to at most $2 \Delta$.
            \item \label{tec:byz-cannot-cause-view-change} Byzantine replicas are unable to impede progress by calling frequent epoch-changes, as an epoch-change will only happen if at least $f+1$ replicas call it. A byzantine primary can hinder the epoch-change from being successful. However, there can only be $f$ byzantine primaries in a row. 
        \end{enumerate}
        
        \begin{definition}\label{def:syn}
            Two replicas are called \textit{synchronized}, if they start their epoch-change timer for an epoch~$e$ within at most $2 \Delta$.
        \end{definition}
        
        \begin{lemma}
            \label{lem:synchronized-after-GST}
            After GST, correct replicas eventually become synchronized.
        \end{lemma}
        \begin{proof}
            Let $u$ be the first correct replica to start its epoch-change timer for epoch~$e$ at time~$t_0$. Following (\ref{tec:start-view-change-timer}), this implies that $u$ received at least $2f$ other epoch-change messages for epoch~$e$ (or higher). Of these $2f$ messages, at least $f$ originate from other correct replicas. Thus, together with its own epoch-change message, at least $f+1$ correct replicas broadcasted epoch-change messages by time~$t_0$. These $f+1$ epoch-change messages are seen by all correct replicas at the latest by time~$t_0 + \Delta$. Thus, according to (\ref{tec:broadcast-upon-f+1}), at time~$t_0 + \Delta$ all correct replicas broadcast an epoch-change message for epoch~$e$. Consequently, at time~$t_0 + 2\Delta$ all correct replicas have received at least $2f$ other epoch-change messages and will start the timer for epoch~$e$.
        \end{proof}
        
        \begin{lemma}
            \label{lem:same-epoch}
            After GST, all correct replicas will be in the same epoch long enough for a correct leader to make progress.
        \end{lemma}
        \begin{proof}
            From Lemma~\ref{lem:synchronized-after-GST}, we conclude that after GST, all correct replicas will eventually enter the same epoch if the epoch-change timer is sufficiently large. Once the correct replicas are synchronized in their epoch, the duration needed for a correct leader to commit a request is bounded. Note that all correct replicas will be in the same epoch for a sufficiently long time as the timers are configured accordingly. Additionally, byzantine replicas are unable to impede progress by calling frequent epoch-changes, according to (\ref{tec:byz-cannot-cause-view-change}).
        \end{proof}

        \begin{theorem}\label{thm:liveness}
            If a correct client~$c$ broadcasts request~$r$, then every correct replica eventually commits $r$.
        \end{theorem}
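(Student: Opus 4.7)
The plan is to reduce the theorem to a statement about when a favorable epoch configuration occurs. First I would apply Lemma~\ref{lem:synchronized-after-GST} and Lemma~\ref{lem:same-epoch} to conclude that after GST, all correct replicas eventually reach the same epoch and remain there long enough for a correct leader to commit its assigned requests. Consequently, in any epoch whose primary is correct, the system makes progress on every bucket assigned to a correct leader.

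The remaining step, and the crux of the proof, is to argue that request~$r$ from correct client~$c$ is not indefinitely censored. The bucket containing $r$ is determined solely by $h(c)$, and by the hash space division (Section~\ref{sec:hash}) the bucket is rotated round-robin across epochs, while the primary rotation guarantees that every potential leader serves this bucket under at least $f+1$ distinct primaries within a bounded number of epochs. Since at most $f$ replicas are byzantine, a pigeonhole argument yields infinitely many post-GST epochs in which both the primary is correct and the leader responsible for bucket~$h(c)$ is correct.

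In any such favorable epoch, I would then argue that the correct leader of bucket~$h(c)$ actually commits $r$. Assuming $c$ retransmits $r$ until it obtains $f+1$ matching replies (as specified in the client protocol), $r$ is present at every correct replica once the epoch starts. The correct leader picks the next available local sequence number within its watermark window, issues a \texttt{pre-prepare}, and, since the $2f+1$ correct replicas are synchronized and responsive, completes the leader-prepare, backup-prepare, and commit phases; the resulting commit-certificate forces every correct replica to commit~$r$.

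The main obstacle will be ruling out scenarios in which the surrounding timers or bounds prevent $r$ from being reached within a single favorable epoch. To close this gap I would invoke the invariants established earlier: the epoch timer~$T$ is sized so that $C_{\min}\in\Omega(n^2)$ requests complete under a correct primary after GST, the watermark parameter~$k$ is chosen so that correct leaders never stall on checkpointing, and the no-progress timer~$T_p$ does not expire for correct leaders. Together these ensure that within a single favorable epoch the correct leader of bucket~$h(c)$ advances its local sequence far enough to include~$r$, after which the commit-certificate propagates to all correct replicas and the theorem follows.
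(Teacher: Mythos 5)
Your proposal is correct and follows essentially the same route as the paper's proof: invoke Lemmas~\ref{lem:synchronized-after-GST} and~\ref{lem:same-epoch} to get post-GST synchronization, then observe that the bucket and primary rotations guarantee a favorable epoch with a correct primary and a correct leader serving bucket~$h(c)$, in which~$r$ gets committed. You supply more detail (the pigeonhole counting and the timer/watermark sanity checks) than the paper's terse version, but the decomposition and key idea — reduce liveness to the eventual occurrence of a favorable epoch via the deterministic rotations — are the same.
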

        \begin{proof}
            Following Lemmas~\ref{lem:synchronized-after-GST} and \ref{lem:same-epoch}, we know that all correct replicas will eventually be in the same epoch after GST. Hence, in any epoch with a correct primary, the system will make progress. Note that a correct client will not issue invalid requests. It remains to show that an epoch with a correct primary and a correct leader assigned to hash bucket $h(c)$ will occur. We note that this is given by the bucket rotation, which ensures that a correct leader repeatedly serves each bucket in a correct primary epoch. 
        \end{proof}
    \subsection{Efficiency}
        To demonstrate that \prot is efficient, we start by analyzing the authenticator complexity for reaching consensus during an epoch. Like Linear-PBFT~\cite{gueta2018sbft}, using each leader as a collector for partial signatures in the backup prepare and commit phase, allows \prot to achieve linear complexity during epoch operation.
        \begin{lemma}
            \label{lem:complexity-during-epoch}
            The authenticator complexity for committing a request during an epoch is $\Theta(n)$.
        \end{lemma}
        \begin{proof}
            During the leader prepare phase, the authenticator complexity is at most $n$. The primary computes its signature to attach it to the pre-prepare message. This signature is verified by no more than $n-1$ replicas.
            
            Furthermore, the backup prepare and commit phase's authenticator complexity is less than $3n$ each. Initially, at most $n-1$ backups, compute their partial signature and send it to the leader, who, in turn, verifies $2f$ of these signatures. The leader then computes its partial signature, as well as computing the combined signature. Upon receiving the combined signature, the $n-1$ backups need to verify the signature.
            
            Overall, the authenticator complexity committing a request during an epoch is thus at most $7n + o(n)\in \Theta(n)$.
        \end{proof}
        
        We continue by calculating the authenticator complexity of an epoch-change. Intuitively speaking, we reduce PBFT's view-change complexity from $\Theta(n^3)$ to $\Theta(n^2)$ by employing threshold signatures.
        However, as \prot allows for $n$ simultaneous leaders, we obtain an authenticator complexity of $\Theta(n^3)$ as a consequence of sharing the same information for $n$ leaders during the epoch-change.
        \begin{lemma}\label{lem:complexity-epoch-change}
            The authenticator complexity of an epoch-change is $\Theta(n^3)$.
        \end{lemma}
        \begin{proof}
            The epoch-change for epoch~$e+1$ is initiated by replicas sending epoch-change messages to the primary of epoch~$e+1$. Each epoch-change message holds $n$ authenticators for each leader's last checkpoint-certificates. As there are at most $2k$ hanging requests per leader, a further $\mathcal{O}(n)$ authenticators for prepared- and commit-certificates of the open requests per leader are included in the message. Additionally, the sending replica also includes its signature. Each replica newly computes its signature to sign the epoch-change message, the remaining authenticators are already available and do not need to be created by the replicas. Thus, a total of no more than $n$ authenticators are computed for the epoch-change messages. We also note epoch-change message contains $\Theta(n)$ authenticators. Therefore, the number of authenticators received by each replica is $ \Theta(n^2)$.
            
            After the collection of $2f+1$ epoch-change messages, the primary performs a classical 3-phase reliable broadcast. The primary broadcasts the same signed message to start the classical 3-phase reliable broadcast. While the primary computes $1$ signature, at most $n-1$ replicas verify this signature. In the two subsequent rounds of all-to-all communication, each participating replica computes $1$ and verifies $2f$ signatures. Thereby, the authenticator complexity of each round of all-to-all communication is at most $(2f+1)\cdot n$. Thus, the authenticator complexity of the 3-phase reliable broadcast is bounded by $(4f+3)\cdot n\in \Theta (n^2)$.
                
            After successfully performing the reliable broadcast, the primary sends out a new-epoch message to every replica in the network. The new-epoch message contains the epoch-change messages held by the primary and the required pre-prepare messages for open requests. There are $\mathcal{O}(n)$ such pre-prepare messages, all signed by the primary. Finally, each new-epoch message is signed by the primary. Thus, the authenticator complexity  of the new-epoch message is  $\Theta(n^2)$. However, suppose a replica has previously received and verified an epoch-change from replica~$u$ whose epoch-change message is included in the new-epoch message. In that case, the replica no longer has to check the authenticators in $u$'s epoch-change message again. For the complexity analysis, it does not matter when the replicas verify the signature. We assume that all replicas verify the signatures contained in the epoch-change messages before receiving the new-epoch messages. Thus, the replicas only need to verify the $\mathcal{O}(n)$ new authenticators contained in the new-epoch message. 
            
            Overall, the authenticator complexity of the the epoch-change is at most $\Theta( n^3)$.
        \end{proof}
        
        Finally, we argue that after GST, there is sufficient progress by correct replicas to compensate for the high epoch-change cost.

        \begin{theorem}\label{amormult}
            After GST, the amortized authenticator complexity of committing a request is $\Theta(n)$.
        \end{theorem}
        \begin{proof}
            
            To find the amortized authenticator complexity of committing a request, we consider an epoch and the following epoch-change.
            After GST, the authenticator complexity of committing a request for a correct leader is $\Theta(n)$. The timeout value is set such that a correct worst-case leader creates at least $C_{\min}$ requests in each epoch initiated by a correct primary. Thus, there are $\Theta (n)$ correct replicas, each committing $C_{\min}$ requests. By setting $C_{\min} \in \Omega (n^2)$, we guarantee that at least $\Omega (n^3)$ requests are created during an epoch given a correct primary. 
            
            Byzantine primaries can ensure that no progress is made in epochs they initiate, by failing to share the new-epoch message with correct replicas. However, at most, a constant fraction of epochs lies in the responsibility of byzantine primaries. We conclude that, on average, $\Omega(n^3)$ requests are created during an epoch.
                    
            Following Lemma~\ref{lem:complexity-epoch-change}, the authenticator complexity of an epoch-change is $\Theta(n^3)$. Note that the epoch-change timeout~$ T_e$ is set so that correct primaries can successfully finish the epoch-change after GST. Not every epoch-change will be successful immediately, as byzantine primaries might cause unsuccessful epoch-changes. Specifically, byzantine primaries can purposefully summon an unsuccessful epoch-change to decrease efficiency.
            
            In case of an unsuccessful epoch-change, replicas initiate another epoch-change -- and continue doing so -- until a successful epoch-change occurs. However, we only need to start $\mathcal{O}(1)$ epoch-changes on average to be successful after GST, as the primary rotation ensures that at least a constant fraction of primaries is correct. Hence, the average cost required to reach a successful epoch-change is $\Theta(n^3)$.
            
            We find the amortized request creation cost by adding the request creation cost to the ratio between the cost of a successful epoch-change and the number of requests created in an epoch, that is, $\Theta(n) + \frac{\Theta(n^3)}{\Omega(n^3)}= \Theta(n).$
        \end{proof}
    

    \subsection{Optimistic Performance}
      
        Throughout this section, we make the following optimistic assumptions: all replicas are considered correct, and the network is stable and synchronous. We employ this model to assess the optimistic performance of \prot, i.e., theoretically evaluating its best-case throughput. Note that this scenario is motivated by practical applications, as one would hope to have functioning hardware at hand, at least initially.
        Additionally, we assume that the best-case throughput is limited by the available computing power of each replica -- predominantly required for the computation and verification of cryptographic signatures. We further assume that the available computing power of each correct replica is the same, which we believe is realistic as the same hardware will often be employed for each replica in practice. Without loss of generality, each leader can compute/verify one authenticator per time unit. As \textit{throughput}, we define the number of requests committed by the system per time unit.
        Finally, we assume that replicas only verify the authenticators of relevant messages. For example, a leader receiving $3f$ prepare messages for a request will only verify $2f$ authenticators. Similarly, pre-prepare messages outside the leaders' watermarks will not be processed by backups. 
        Note that we will carry all assumptions into Section~\ref{sec:byzantine-resilient-performance}. There they will, however, only apply to correct replicas. 
    
        \subsubsection{Sequential-Leader Protocols}
            We claim that \prot achieves higher throughput than sequential-leader protocols by the means of leader parallelization. To support this claim, we compare \prot's throughput to that of a generic sequential-leader protocol. The generic sequential-leader protocol will serve as an asymptotic characterization of many state-of-the-art sequential-leader protocols~\cite{castro2002practical,gueta2018sbft,yin2019hotstuff}.
            
            A \textit{sequential-leader protocol} is characterized by having a unique leader at any point in time. Throughout its reign, the leader is responsible for serving all client requests. Depending on the protocol, the leader is rotated repeatedly or only upon failure.
            \begin{lemma}
                \label{lem:sequential-leader-throughput}
                A sequential-leader protocol requires at least $\Omega(n)$ time units to process a client request.
            \end{lemma}
            \begin{proof}
                In sequential-leader protocols, a unique replica is responsible for serving all client requests at any point in time. This replica must verify $\Omega(n)$ signatures to commit a request while no other replica leads requests simultaneously. Thus, a sequential-leader protocol requires  $\Omega(n)$ time units to process a request.
            \end{proof}
            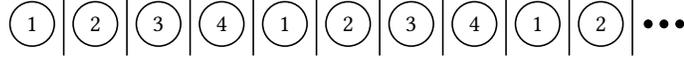
\begin{figure}[htpb]
                \centering
                    
                \begin{tikzpicture}[shorten >=1pt,auto,node distance=3cm,semithick,scale = 0.7, state/.style={circle,draw, minimum size=0.59cm}]
                            	
                    \node[state] (A) at (0,0) {$1$};	
                    \node[state] (B) at (1.2,0) {$2$};	
                    \node[state] (C) at (2.4,0) {$3$};
                    \node[state] (D) at (3.6,0) {$4$};
                    \node[state] (E) at (4.8,0) {$1$};
                    \node[state] (F) at (6,0) {$2$};
                    \node[state] (G) at (7.2,0) {$3$};	
                    \node[state] (H) at (8.4,0) {$4$};
                    \node[state] (I) at (9.6,0) {$1$};
                    \node[state] (J) at (10.8,0) {$2$};	
                    \node at (11.7,0) [circle,fill,inner sep=1.2pt]{};
                    \node at (12.0,0) [circle,fill,inner sep=1.2pt]{};
                    \node at (12.3,0) [circle,fill,inner sep=1.2pt]{};
                                
                    \draw (0.6,-0.6) -- (0.6,0.6);
                    \draw (1.8,-0.6) -- (1.8,0.6);
                    \draw (3,-0.6) -- (3,0.6);
                    \draw (4.2,-0.6) -- (4.2,0.6);
                    \draw (5.4,-0.6) -- (5.4,0.6);
                    \draw (6.6,-0.6) -- (6.6,0.6);
                    \draw (7.8,-0.6) -- (7.8,0.6);
                    \draw (9,-0.6) -- (9,0.6);
                    \draw (10.2,-0.6) -- (10.2,0.6);
                    \draw (11.4,-0.6) -- (11.4,0.6);
                \end{tikzpicture}
                    
                \caption{Sequential leader example with four leaders. Throughout its reign a sequential leader is responsible for serving all client requests. Leader changes are indicated by vertical lines.}
                \label{fig:sequential}
            \end{figure}
            
        \subsubsection{\prot Epoch}
            With \prot, we propose a \textit{parallel-leader protocol} (cf.\ Figure~\ref{fig:parallel}) that divides client requests into $m\cdot n$ independent hash buckets. Each hash bucket is assigned to a unique leader at any time. The hash buckets are rotated between leaders across epochs to ensure liveness (cf.\ Section~\ref{sec:hash}). Within an epoch, a leader is only responsible for committing client requests from its assigned hash bucket(s). Overall, this parallelization leads to a significant speed-up.
            
            \begin{figure}[htpb]   
                \centering
                \begin{tikzpicture}[shorten >=1pt,auto,node distance=3cm,semithick,scale = 0.7, state/.style={circle,draw, minimum size=0.59cm},mytrap/.style={trapezium, trapezium angle=102.5, draw,inner xsep=0pt,outer sep=0pt,minimum height=5mm, text width=#1}]
                    \definecolor{navy}{rgb}{0,0,0.5}
                    \node[state] (A1) at (0,0) {$1$};
                    \node [mytrap=3.5 pt,fill =yellow] at (-2,0) {};
                    \node[] (A21) at (-3.5,0) {\# -$00$};
                    \node[state] (B1) at (0,-1.2) {$2$};
                    \node [mytrap=3.5 pt,pattern=horizontal, hatch distance=4pt, hatch thickness = 2pt, pattern color=teal] at (-2,-1.2) {};
                    \node[] (B21) at (-3.5,-1.2) {\# -$01$};
                    \node[state] (C1) at (0,-2.4) {$3$};
                    \node [mytrap=3.5 pt,pattern=mydots, pattern color=magenta] at (-2,-2.4) {};
                    \node[] (C21) at (-3.5,-2.4) {\# -$10$};
                    \node[state] (D1) at (0,-3.6) {$4$};
                    \node [mytrap=3.5 pt,pattern=vertical, hatch distance=4pt, hatch thickness = 2pt, pattern color=navy] at (-2,-3.6) {};
                    \node[] (D21) at (-3.5,-3.6) {\# -$11$};
                    \draw (1.2,0.6) -- (1.2,-4.2);
                    \node[state] (A2) at (2.4,0) {$4$};	
                    \node[state] (B2) at (2.4,-1.2) {$1$};
                    \node[state] (C2) at (2.4,-2.4) {$2$};
                    \node[state] (D2) at (2.4,-3.6) {$3$};
                    \draw (3.6,0.6) -- (3.6,-4.2);
                    \node[state] (A3) at (4.8,0) {$3$};	
                    \node[state] (B3) at (4.8,-1.2) {$4$};
                    \node[state] (C3) at (4.8,-2.4) {$1$};
                    \node[state] (D3) at (4.8,-3.6) {$2$};
                    \draw (6,0.6) -- (6,-4.2);
                    \node at (6.9,-1.8) [circle,fill,inner sep=1.2pt]{};
                    \node at (7.2,-1.8) [circle,fill,inner sep=1.2pt]{};
                    \node at (7.5,-1.8) [circle,fill,inner sep=1.2pt]{};
                \end{tikzpicture}
                \caption{Parallel leader example with four leaders and four hash buckets. In each epoch, leaders are only responsible for serving client requests in their hash bucket. Epoch-changes are indicated by vertical lines.}
                \label{fig:parallel}
            \end{figure}
            
            To show the speed-up gained through parallelization, we first analyze the \textit{optimistic epoch throughput} of \prot, i.e., the throughput of the system during stable networking conditions in the best-case scenario with $3f + 1$ correct replicas. Furthermore, we assume the number of requests included in a checkpoint to be sufficiently large, such that no leader must ever stall when waiting for a checkpoint to be created. We analyze the effects of epoch-changes and compute the overall best-case throughput of \prot in the aforementioned optimistic setting.

            \begin{lemma}\label{optimalthroughput}
                After GST, the best-case epoch throughput with $3f+1$ correct replicas is $\dfrac{k\cdot (3f+1)}{k\cdot (19f+3)+(8f+2)}.$
            \end{lemma}
            \begin{proof}
                In the optimistic setting, all epochs are initiated by correct primaries, and thus all replicas will be synchronized after GST.

                In \prot, $n$ leaders work on client requests simultaneously. As in sequential-leader protocols, each leader needs to verify at least $\mathcal{O}(n)$ signatures to commit a request. A leader needs to compute $3$ and verify $4f$ authenticators precisely to commit a request it proposes during epoch operation. Thus, leaders need to process a total of $4f+3\in \Theta(n)$ signatures to commit a request. With the help of threshold signatures, backups involved in committing a request only need to compute $2$ and verify $3$ authenticators. We follow that a total of $4f+3 +5\cdot 3f=19f+3$ authenticators are computed/verified by a replica for one of its own requests and $3f$ requests of other leaders.
                
                After GST, each correct leader~$v$ will quickly converge to a $C_v$ such that it will make progress for the entire epoch-time, hence, working at its full potential. We achieve this by rapidly increasing the number of requests assigned to each leader outperforming its assignment and never decreasing the assignment below what the replica recently managed.
                
                Checkpoints are created every $k$ requests and add to the computational load. A leader verifies and computes a total of $2f+2$ messages to create a checkpoint, and the backups are required to compute $1$ partial signature and verify $1$ threshold signature. The authenticator cost of creating $3f+1$ checkpoints, one for each leader, is, therefore, $8f+2$ per replica.
                
                Thus, the best-case throughput of the system is $\dfrac{ k\cdot (3f+1)}{k\cdot (19f+3)+(8f+2)}.$
            \end{proof}
            Note that it would have been sufficient to show that the epoch throughput is $\Omega(1)$ per time unit, but this more precise formula will be required in Section~\ref{sec:byzantine-resilient-performance}. Additionally, we would like to point out that the choice of $k$ does not influence the best-case throughput asymptotically.
            
        \subsubsection{\prot Epoch-Change}
            As \prot employs bounded-length epochs, repeated epoch-changes have to be considered. In the following, we will show that \prot's throughput is dominated by its authenticator complexity during the epochs. To that end, observe that for $C_{\min} \in \Omega(n^2)$, every epoch will incur an authenticator complexity of $\Omega(n^3)$ per replica and thus require $\Omega(n^3)$ time units.
            
            \begin{lemma}\label{lem:epoch-change-correct-primary}
                After GST, an epoch-change under a correct primary requires $\Theta(n^2)$ time units.
            \end{lemma}
            \begin{proof}
                Following Lemma~\ref{lem:complexity-epoch-change}, the number of authenticators computed and verified by each replica for all epoch-change messages is $\Theta(n^2)$. Each replica also processes $\Theta(n)$ signatures during the reliable broadcast, and $\mathcal{O}(n)$ signatures for the new-epoch messages. Overall, each replica thus processes $\Theta(n^2)$ authenticators during the epoch-change. Subsequently, this implies that the epoch-change requires $\Theta(n^2)$ time units, as we require only a constant number of message delays to initiate and complete the epoch-change protocol. Recall that we assume the throughput to be limited by the available computing power of each replica.
            \end{proof}
            
            Theoretically, one could set $C_{\min}$ even higher such that the time the system spends with epoch-changes becomes negligible. However, there is a trade-off for practical reasons: increasing $C_{\min}$ will naturally increase the minimal epoch-length, allowing a byzantine primary to slow down the system for a longer time stretch. Note that the guarantee for byzantine-resilient performance (cf.\ Section~\ref{sec:byzantine-resilient-performance}) will still hold.
            
        \subsubsection{\prot Optimistic Performance}
            Ultimately, it remains to quantify \prot's overall best-case throughput.
            \begin{lemma}
                \label{lem:optimistic-throughput}
                After GST, and assuming all replicas are correct, \prot requires $\mathcal{O}(n)$ time units to process $n$ client requests on average.
            \end{lemma}
            \begin{proof}
                Under a correct primary, each correct leader will commit at least $C_{\min} \in \Omega(n^2)$ requests after GST. Hence, \prot will spend at least $\Omega(n^3)$ time units in an epoch, while only requiring $\Theta(n^2)$ time units for an epoch-change (Lemma~\ref{lem:epoch-change-correct-primary}).
                Thus, following Lemma~\ref{optimalthroughput}, \prot requires an average of $\mathcal{O}(n)$ time units to process $n$ client requests.
            \end{proof}
            
            Following Lemmas~\ref{lem:sequential-leader-throughput} and \ref{lem:optimistic-throughput}, the speed-up gained by moving from a sequential-leader protocol to a parallel-leader protocol is proportional to the number of leaders.
            \begin{theorem}\label{speedup}
                If the throughput is limited by the (equally) available computing power at each replica, the speed-up for equally splitting requests between $n$ parallel leaders over a sequential-leader protocol is at least $\Omega(n)$.
            \end{theorem}

    \subsection{Byzantine-Resilient Performance}
    \label{sec:byzantine-resilient-performance}
        While many BFT protocols present practical evaluations of their performance that completely ignore byzantine adversarial behavior~\cite{castro2002practical,gueta2018sbft,yin2019hotstuff,stathakopoulou2019mir}, we provide a novel, theory-based byzantine-resilience guarantee. We first analyze the impact of byzantine replicas in an epoch under a correct primary. Next, we discuss the potential strategies of a byzantine primary trying to stall the system. And finally, we conflate our observations into a concise statement.
        
        \subsubsection{Correct Primary Throughput}
            To gain insight into the byzantine-resilient performance, we analyze the optimal byzantine strategy. In epochs led by correct primaries, we will consider their roles as backups and leaders separately.
            On the one hand, for a byzantine leader, the optimal strategy is to leave as many requests hanging, while not making any progress (Lemma~\ref{lem:byzantine-leader}).
            \begin{lemma}
                \label{lem:byzantine-leader}
                After GST and under a correct primary, the optimal strategy for a byzantine leader is to leave $2k$ client requests hanging and commit no request.
            \end{lemma}
            \begin{proof}
                Correct replicas will be synchronized as a correct primary initiates the epoch. Thus, byzantine replicas' participation is not required for correct leaders to make progress.
                
                A byzantine leader can follow the protocol accurately (at any chosen speed), send messages that do not comply with the protocol, or remain unresponsive. 
                
                If following the protocol, a byzantine leader can open at most $2k$ client requests simultaneously as all further prepare messages would be discarded. Leaving the maximum possible number of requests hanging achieves a throughput reduction as it increases the number of authenticators shared during the epoch and the epoch-change. Hence, byzantine leaders leave the maximum number of requests hanging.
                
                While byzantine replicas cannot hinder correct leaders from committing requests, committing any request can only benefit the throughput of \prot. To that end, note that after GST, each correct leader~$v$ will converge to a $C_v$ such that it will make progress during the entire epoch-time; hence, prolonging the epoch-time is impossible. The optimal strategy for byzantine leaders is thus to stall progress on their assigned hash buckets.
                
                Finally, note that we assume the threshold signature scheme to be robust and can, therefore, discard any irrelevant message efficiently.
            \end{proof}
            
            On the other hand, as a backup, the optimal byzantine strategy is not helping other leaders to make progress (Lemma~\ref{lem:byzantine-backup}).
            \begin{lemma}
                \label{lem:byzantine-backup}
                Under a correct primary, the optimal strategy for a byzantine backup is to remain unresponsive.
            \end{lemma}
            \begin{proof}
                Byzantine participation in the protocol can only benefit the correct leaders' throughput as they can simply ignore invalid messages. Any authenticators received in excess messages will not be verified and thus do not reduce the system throughput.

            \end{proof}
            
            In conclusion, we observe that byzantine replicas have little opportunity to reduce the throughput in epochs under a correct primary.
            \begin{theorem}
                \label{thm:byz-throughput-correct-primary}
                After GST, the effective utilization under a correct primary is at least $\frac{8}{9}$ for $n \to \infty$.
            \end{theorem}
            \begin{proof}
                Moving from the best-case scenario with $3f+1$ correct leaders to only $2f+1$ correct leaders, each correct leader still processes $4f+3$ authenticators per request, and $5$ authenticators for each request of other leaders. We know from Lemma~\ref{lem:byzantine-leader} that only the $2f+1$ correct replicas are committing requests and creating checkpoints throughout the epoch. The authenticator cost of creating $2f+1$ checkpoints, one for each correct leader, becomes $6f+2$ per replica.
                
                Byzantine leaders can open at most $2k$ new requests in an epoch. Each hanging request is seen at most twice by correct replicas without becoming committed. Thus, each correct replica processes no more than $8k$ authenticators for requests purposefully left hanging by a byzantine replica in an epoch. Thus, the utilization is reduced at most by a factor $\left(1-\frac{8 k f}{T}\right)$, where $T$ is the maximal epoch length. While epochs can finish earlier, this will not happen after GST as soon as each correct leader~$v$ works at its capacity~$C_v$.
                
                Hence, the byzantine-resilient epoch throughput becomes $$\frac{k\cdot (2f+1)}{k\cdot (14f+3)+(6f+2)} \cdot \left(1-\frac{8  kf}{T}\right).$$
                
                By comparing this to the best-case epoch throughput from Lemma~\ref{optimalthroughput}, we obtain a maximal throughput reduction of
                \begin{align*}
                    \frac{ (2f+1)(k \cdot (19f+3)+(8f+2))}{(3f+1)(k \cdot (14f+3)+(6f+2))}\cdot \left(1-\frac{8  k  f}{ T}\right).
                \end{align*}
                Observe that the first term decreases and approaches $\frac{8}{9}$ for $n \to \infty$:
                \begin{align*}
                    \frac{(2f+1)(k \cdot (19f+3)+(8f+2))}{(3f+1)(k \cdot (14f+3)+(6f+2))}\stackrel{n\rightarrow \infty}{=}\frac{16 + 38 k}{18  + 42  k }\geq \frac{8}{9}.
                \end{align*}
                
                We follow that the epoch time is $T \in \Omega (n^3)$, as we set $C_{\min}\in \Omega(n^2)$ and each leader requires $\Omega(n)$ time units to commit one of its requests. Additionally, we know that $8 k  f\in \mathcal{O}(n)$, and thus: $\left(1-\dfrac{8  k  f}{ T}\right)\stackrel{n\rightarrow \infty}{=}1.$
                
                In the limit $n\rightarrow \infty$, the throughput reduction byzantine replicas can impose on the system during a synchronized epoch is therefore bounded by a factor $\dfrac{8}{9}$.
            \end{proof}
            
        \subsubsection{Byzantine Primary Throughput}
            A byzantine primary, evidently, aims to perform the epoch-change as slow as possible. Furthermore, a byzantine primary can impede progress in its assigned epoch entirely, e.g., by remaining unresponsive. We observe that there are two main byzantine strategies to be considered.
            \begin{lemma}\label{byzprimary}
                Under a byzantine primary, an epoch is either aborted quickly or $\Omega(n^3)$ new requests become committed.
            \end{lemma}
            \begin{proof}
                A byzantine adversary controlling the primary of an epoch has three options. Following the protocol and initiating the epoch for all $2f+1$ correct replicas will ensure high throughput and is thus not optimal. Alternatively, initiating the epoch for $s \in [f+1, 2f]$ correct replicas will allow the byzantine adversary to control the progress made in the epoch, as no correct leader can make progress without a response from at least one byzantine replica. However, slow progress can only be maintained as long as at least $2f + 1$ leaders continuously make progress. By setting the no-progress timeout~$T_p\in \Theta(T/C_{\min})$, $\Omega(n^3)$ new requests per epoch can be guaranteed. In all other scenarios, the epoch will be aborted after at most one epoch-change timeout~$T_e$, the initial message transmission time~$5\Delta$, and one no-progress timeout~$T_p$.
                        
                Note that we do not increase the epoch-change timer~$T_e$ for $f$ unsuccessful epoch-changes in a row. In doing so, we prevent $f$ consecutive byzantine primaries from increasing the epoch-change timer exponentially; thus potentially reducing the system throughput significantly.
            \end{proof}
            
        \subsubsection{\prot Primaries}
            We rotate primaries across epochs based on primary performance history to reduce the control of the byzantine adversary on the system. 
            \begin{lemma}
                \label{lem:primary}
                After a sufficiently long stable time period, the performance of a byzantine primary can only drop below the performance of the worst correct primary once throughout the sliding window. 
            \end{lemma}
            \begin{proof}
                The network is considered stable for a sufficiently long time when all leaders work at their capacity limit, i.e., the number of requests they are assigned in an epoch matches their capacity, and primaries have subsequently been explored once. As soon as all leaders are working at their capacity limit, we observe the representative performance of all correct primaries, at least.
                
                \prot repeatedly cycles through the $2f+1$ best primaries. A primary's performance is based on its last turn as primary. Consequently, a primary is removed from the rotation as soon as its performance drops below one of the $f$ remaining primaries. We conclude that a byzantine primary will only be nominated beyond its single exploration throughout the sliding window if its performance matches at least the performance of the worst correct primary.
            \end{proof}
            
            As its successor determines a primary's performance, the successor can influence the performance slightly. However, this is bounded by the number of open requests -- $\mathcal{O}(n)$ many -- which we consider being well within natural performance variations, as $\Omega(n^3)$ requests are created in an epoch under a correct primary. Thus, we will disregard possible performance degradation originating at the succeeding primary.
            
            From Lemma~\ref{lem:primary}, we easily see that the optimal strategy for a byzantine primary is to act according to Lemma~\ref{byzprimary} -- performing better would only help the system. In a stable network, byzantine primaries will thus only have one turn as primary throughout any sliding window. In the following, we consider a primary to be behaving byzantine if it performs worse than all correct primaries.
            
            \begin{theorem}
                \label{thm:proportionbyzprimary}
                After the system has been in stability for a sufficiently long time period, the fraction of byzantine behaving primaries is $\frac{f}{g}$. 
            \end{theorem}
            \begin{proof}
                Following from Lemma~\ref{lem:primary}, we know that a primary can only behave byzantine once in the sliding window.
                There are a total of $g$ epochs in a sliding window, and the $f$ byzantine replicas in the network can only act byzantine in one epoch included in the sliding window. We see that the fraction of byzantine behaving primaries is $\frac{f}{g}$.
            \end{proof}
            
            The configuration parameter $g$ determines the fraction of byzantine primaries in the system's stable state, while simultaneously dictating how long it takes to get there after GST. Setting $g$ to a small value ensures that the system quickly recovers from asynchrony. On the other hand, setting $g$ to larger values provides near-optimal behavior once the system is operating at its optimum.
            
        \subsubsection{\prot Byzantine-Resilient Performance}
        
            Combining the byzantine strategies from Theorem~\ref{thm:byz-throughput-correct-primary}, Lemma~\ref{byzprimary} and Theorem~\ref{thm:proportionbyzprimary}, we obtain the following.
            \begin{theorem}\label{worstcase}
                After GST, the effective utilization is asymptotically $\frac{8}{9} \cdot \frac{g-f}{g}$ for $n \to \infty$.
            \end{theorem}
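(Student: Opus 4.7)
The plan is to combine the three previously-established results -- Theorem~\ref{thm:byz-throughput-correct-primary} on correct-primary epoch utilization, Lemma~\ref{byzprimary} on byzantine-primary epoch behavior, and Theorem~\ref{thm:proportionbyzprimary} on the steady-state fraction of byzantine primaries -- via a weighted average over a long run of epochs in the stable regime. The effective utilization is defined as the ratio of actual throughput to best-case throughput, so I would average this ratio over a window of $g$ consecutive epochs after the system has stabilized.

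First, I would set up the amortization over a sliding window containing $g$ epochs, of which at most $f$ are led by byzantine-behaving primaries (Theorem~\ref{thm:proportionbyzprimary}) and at least $g-f$ are led by primaries whose performance is at least that of the worst correct one. For each correct-primary epoch, Theorem~\ref{thm:byz-throughput-correct-primary} yields an utilization of at least $\tfrac{8}{9}$ in the limit $n \to \infty$. For each byzantine-primary epoch, I would invoke the worst case from Lemma~\ref{byzprimary} and, without loss of generality, count zero useful throughput -- the lemma's alternative case only helps us, and adopting the pessimistic branch is exactly what makes the bound tight. Summing the contributions over the $g$ epochs and dividing by $g \cdot (\text{best-case throughput})$ immediately gives $\tfrac{8}{9} \cdot \tfrac{g-f}{g}$.

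The technical step that needs care is showing that the time consumed by epoch-changes (both successful and failed) is asymptotically dominated by the time spent in correct-primary epoch operation, so that weighting epochs equally is legitimate when the metric is throughput per unit time. For this I would appeal to Lemma~\ref{lem:epoch-change-correct-primary}: an epoch-change costs $\Theta(n^2)$ time units per replica, whereas a correct-primary epoch runs for $\Omega(n^3)$ time units because $C_{\min}\in\Omega(n^2)$ and each committed request costs $\Theta(n)$ time units per leader. A failed epoch-change under a byzantine primary adds only constant many timeout intervals, so the total overhead from the at most $f$ byzantine primaries per window is still $o$ of the correct-epoch running time as $n \to \infty$. Hence the time-weighted average collapses to the epoch-count-weighted average in the limit.

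The main obstacle I anticipate is bookkeeping rather than conceptual: precisely folding in the hanging-request factor $(1 - \tfrac{8kf}{T})$ from Theorem~\ref{thm:byz-throughput-correct-primary} and the aborted-epoch contribution from Lemma~\ref{byzprimary} without accidentally losing a constant factor, and justifying that ``asymptotically'' in the statement is interpreted as $n \to \infty$ after the sufficiently-long-stability condition of Theorem~\ref{thm:proportionbyzprimary} has been met. Once these approximations are shown to contribute vanishing correction terms, the final identity $\tfrac{8}{9} \cdot \tfrac{g-f}{g}$ follows immediately from the weighted-average computation and completes the proof.
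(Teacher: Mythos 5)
Your proposal matches the paper's proof essentially line-for-line: same weighted-average decomposition over correct- versus byzantine-primary epochs, the same WLOG that byzantine-primary epochs contribute zero throughput, the same reliance on Theorem~\ref{thm:byz-throughput-correct-primary} and Theorem~\ref{thm:proportionbyzprimary} for the two factors, and the same appeal to Lemma~\ref{lem:epoch-change-correct-primary} to argue that epoch-change time is dominated by correct-epoch operation time. The only difference is that you spell out the time-weighted versus epoch-count-weighted amortization slightly more explicitly than the paper does; the underlying argument is identical.
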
 
            \begin{proof}
                To estimate the effective utilization, we only consider the throughput within epochs. That is because the time spent in correct epochs dominates the time for epoch-changes, as well as the time for failed epoch-changes under byzantine primaries, as the number of replicas increases (Lemma~\ref{lem:epoch-change-correct-primary}).
                Without loss of generality, we consider no progress to be made in byzantine primary epochs. We make this assumption, as we cannot guarantee asymptotically significant throughput. 
                From Theorem~\ref{thm:byz-throughput-correct-primary}, we know that in an epoch initiated by a correct primary, the byzantine-resilient effective utilization is at least $\frac{8}{9}$ for $n \to \infty$. Further, at least $\frac{g-f}{g}$ of the epochs are led by correct primaries after a sufficiently long time period in stability and thus obey this bound (Theorem~\ref{thm:proportionbyzprimary}). 
                In the limit for $n \to \infty$ the effective utilization is $\frac{8}{9} \cdot\frac{g-f}{g}$.
            \end{proof}

\section{Implementation}\label{app:implementation}
We first provide an overview of \prot's messages and their format, and then describe \prot's behavior through epoch operation and epoch change.

\subsection{Messages and Data Structures}\label{cha:messages_overview}
The following is a non-exhaustive list of data structures and message formats used in \prot.


\begin{figure}[H]
    \centering
        \begin{tikzpicture}[node distance=-\pgflinewidth]
\node[short] (a) {\rotatebox{270}{sn}};
\node[short, right=of a] (b) {\rotatebox{270}{e}};
\lnode{b}{c}{requests};
\lnode{c}{d}{p\_certs};
\lnode{d}{e}{c\_certs};
\lnode{e} {f} {\textit{opt\_ref}};


\end{tikzpicture}
    \caption{%
        Data structure format of \PACK (known as block). A \PACK consists of a sequence number $sn$, an epoch number $e$, a list of requests $requests$ (to allow for batch-processing of requests), prepared certificates $p\_certs$ of replicas having prepared this \PACK, commit certificates $c\_certs$ of replicas having sent a \COMMIT message and an optional back-reference $opt\_ref$, if a \PACK refers to another \PACK (this is mainly used for hanging requests during epoch-changes).
    }
    \label{fig:pack}
\end{figure}
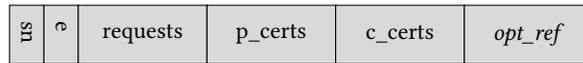
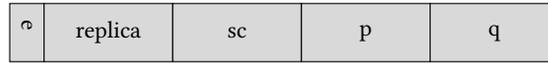
\begin{figure}[H]
    \centering

        \begin{tikzpicture}[node distance=-\pgflinewidth]

\node[short] (a) {\rotatebox{270}{e}};
\node[long,right=of a, label=center:replica] (b) {};
\node[long,right=of b, label=center:sc] (c) {};
\node[long,right=of c, label=center:p] (d) {};
\node[long,right=of d, label=center:q] (e) {};

\end{tikzpicture}

    \caption{%
        Data structure format of \ECO. An epoch-change object consists of an epoch number $e$, an unique identifier $replica$ and maps $sc$, $p$ and $q$.
    }
    \label{fig:epochchangeobj}
\end{figure}

\begin{figure}[H]
    \begin{minipage}[t]{0.48\textwidth}  
    \centering
        \begin{tikzpicture}[node distance=-\pgflinewidth]

\node[short,fill=black] (a) {};
\node[long,right=of a,label=center:leader] (b) {};
\node[long,right=of b, label=center:pack] (c) {};

\end{tikzpicture}
    \captionof{figure}{%
        Message format of \PREPREPARE. It simply contains the leader's unique identifier $leader$ and a \PACK (in its format described above).
    }
    \label{fig:preprepare_msg}
    \end{minipage}
    \hfill
    \begin{minipage}[t]{0.48\textwidth}  
    \centering
        \begin{tikzpicture}[node distance=-\pgflinewidth]

\node[short,fill=black] (a) {};
\node[long,right=of a, label=center:backup] (b) {};
\node[long,right=of b,label=center:hash] (c) {};
\lnode{c}{d}{p\_cert};

\end{tikzpicture}
    \captionof{figure}{%
        Message format of \PREPARE. It contains the replica's unique identifier $replica$, the hash of a \PACK and the replica's prepare certificate $p\_cert$.
    }
    \label{fig:prepare}
    \end{minipage}
\end{figure}

\begin{figure}[H]
    \begin{minipage}[t]{0.48\textwidth}  
    \centering
        \begin{tikzpicture}[node distance=-\pgflinewidth]

\node[short,fill=black] (a) {};
\node[long,right=of a, label=center:leader] (b) {};
\node[long,right=of b,label=center:hash] (c) {};
\lnode{c}{d}{p\_cert*};

\end{tikzpicture}
    \caption{%
        Message format of \PREPARED. It contains the leader's unique identifier $leader$, the hash of a \PACK and the threshold signature $p\_cert*$.
    }
    \label{fig:prepared}
    \end{minipage}
    \hfill
    \begin{minipage}[t]{0.48\textwidth}  
    \centering
        \begin{tikzpicture}[node distance=-\pgflinewidth]

\node[short,fill=black] (a) {};
\node[long,right=of a, label=center:backup] (b) {};
\node[long,right=of b,label=center:hash] (c) {};
\lnode{c}{d}{c\_cert};

\end{tikzpicture}
    \caption{%
        Message format of \COMMIT. It contains the replica's unique identifier $replica$, the hash of a \PACK and the replica's commit certificate $c\_cert$.
    }
    \label{fig:commit}
    \end{minipage}
\end{figure}

\begin{figure}[H]
    \begin{minipage}[t]{0.48\textwidth}  
    \centering
        \begin{tikzpicture}[node distance=-\pgflinewidth]

\node[short,fill=black] (a) {};
\node[long,right=of a, label=center:leader] (b) {};
\node[long,right=of b,label=center:hash] (c) {};
\lnode{c}{d}{c\_cert*};

\end{tikzpicture}
    \caption{%
        Message format of \COMMITTED. It contains the leader's unique identifier $leader$, the hash of a \PACK and the threshold signature $c\_cert*$.
    }
    \label{fig:committed}
    \end{minipage}
    \hfill
    \begin{minipage}[t]{0.48\textwidth}  
    \centering
        \begin{tikzpicture}[node distance=-\pgflinewidth]

\node[short,fill=black] (a) {};
\node[long,right=of a, label=center:client] (b) {};
\node[long,right=of b,label=center:request] (c) {};
\node[long, right=of c,label=center:r\_hash] (d) {};

\end{tikzpicture}
    \caption{%
        Message format of \REQUEST. It consists of the client's unique identifier $client$, the request $request$ and the hash of the request $r\_hash$.
    }
    \label{fig:request}
    \end{minipage}
\end{figure}

\begin{figure}[H]
    \begin{minipage}[t]{0.48\textwidth}  
    \centering
 \begin{tikzpicture}[node distance=-\pgflinewidth]

\node[short,fill=black] (a) {};
\node[long,right=of a, label=center:replica] (b) {};
\node[short, right=of b] (c) {\rotatebox{270}{sn}};
\node[short, right=of c] (d) {\rotatebox{270}{e}};
\node[long, right=of d,label=center:response] (e) {};
\node[long,right=of e,label=center:r\_hash] (f) {};

\end{tikzpicture}
    \caption{%
        Message format of \ANSWER. It consists of the replica's unique identifier $replica$, the sequence number $sn$ and epoch number $e$ assigned to the request, the response $response$ and the request's hash $r\_hash$.
    }
    \label{fig:answer}
    \end{minipage}
    \hfill
    \begin{minipage}[t]{0.48\textwidth}  
    \centering
        \begin{tikzpicture}[node distance=-\pgflinewidth]

\node[short,fill=black] (a) {};
\node[long,right=of a, label=center:backup] (b) {};
\node[long, right=of b, label=center:c\_sn] (c) {};
\node[long, right=of c, label=center:digest] (d) {};
\lnode{d}{e}{c\_cert};

\end{tikzpicture}
    \caption{%
        Message format of \NEWCHECKPOINT. It consists of the backup's unique identifier $backup$, the sequence number of the checkpoint $c\_sn$, the digest of all hashes of all \PACK objects in that checkpoint and the backup's signature $c\_cert$.
    }
    \label{fig:new_checkpoint}
    \end{minipage}
\end{figure}

\begin{figure}[H]
    \begin{minipage}[t]{0.48\textwidth}  
    \centering
        \begin{tikzpicture}[node distance=-\pgflinewidth]

\node[short,fill=black] (a) {};
\node[long,right=of a, label=center:leader] (b) {};
\node[long, right=of b, label=center:c\_sn] (c) {};
\node[long, right=of c, label=center:digest] (d) {};
\lnode{d}{e}{c\_cert*};

\end{tikzpicture}
    \caption{%
        Message format of \CHECKPOINTED. It consists of the leader's unique identifier $leader$, the sequence number of the checkpoint $c\_sn$, the digest of all hashes of all \PACK objects in that checkpoint and the threshold signature $c\_cert*$.
    }
    \label{fig:checkpointed}
    \end{minipage}
    \hfill
    \begin{minipage}[t]{0.48\textwidth}  
    \centering
        \begin{tikzpicture}[node distance=-\pgflinewidth]

\node[short,fill=black] (a) {};
\node[long,right=of a, label=center:eco] (b) {};
\lnode{b}{c}{ec\_cert};
\node[long,right=of c,label=center:e\_hash] (d) {};

\end{tikzpicture}
    \caption{%
        Message format of \EPOCHCHANGE. It consists of an epoch-change object $eco$ as described above, the primary's signature $ec\_cert$ and the hash of the \ECO $e\_hash$.
    }
    \label{fig:epochchange}
    \end{minipage}
\end{figure}

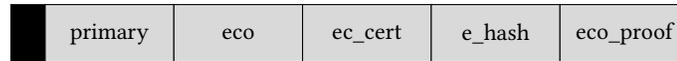
\begin{figure}[H]
    \centering
    \begin{tikzpicture}[node distance=-\pgflinewidth]

\node[short,fill=black] (a) {};
\node[long,right=of a, label=center:primary] (b) {};
\node[long,right=of b, label=center:eco] (c) {};
\lnode{c}{d}{ec\_cert};
\node[long,right=of d,label=center:e\_hash] (e) {};
\node[long,right=of e,label=center:eco\_proof] (f) {};

\end{tikzpicture}
    \caption{%
        Message format of \NEWEPOCH. It consists of the primary's unique identifier $primary$, an epoch-change object $eco$ as described above, the primary's certificate $ec\_cert$, the hash of the \ECO $e\_hash$ and a list of \ECO $eco\_proof$ that were used to compute $eco$.
    }
    \label{fig:newepoch}
\end{figure}
\begin{figure}[H]
    \centering
            \begin{tikzpicture}[node distance=-\pgflinewidth]

\node[short,fill=black] (a) {};
\node[long,right=of a, label=center:primary] (b) {};
\node[long,right=of b, label=center:config] (c) {};
\node[long,right=of c,label=center:conf\_hash] (d) {};

\end{tikzpicture}
    \caption{%
        Message format of \NEWEPOCHCONF. It consists of the primary's unique identifier $primary$, the new replica configuration $config$ and the configuration's hash $conf\_hash$.
    }
    \label{fig:newepochconfig}
\end{figure}
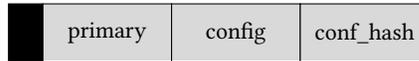

\subsection{Epoch Operation}
The following detail the protocol's behavior during epochs.

\paragraph{Client Request Pipeline.}
The normal operation of \prot is as follows:
\begin{enumerate}
    \item When a client broadcasts a request, each replica assigns the request to a bucket by hashing the issuer's \textit{unique client id}.
    \item The leader responsible for that bucket assigns a monotonically increasing \textit{sequence number} to the request and broadcasts a signed \PREPREPARE message.
    \item Backups receiving a \PREPREPARE message verify the message and reply with a signed \PREPARE message to the leader, using a partial threshold signature.
    \item Once a leader receives $2f$ valid \PREPARE messages for a given sequence number, it computes a combined signature from the $2f$ \PREPARE messages and its own signature. It then broadcasts the combined signature in a \PREPARED message to all replicas.
    \item A backup receiving a \PREPARED message, verifies the signature and replies with a \COMMIT message. 
    \item Upon receiving $2f$ \COMMIT messages for a sequence number, the leader once again computes a combined signature from its own signature and $2f$ \COMMIT messages and broadcasts the combined signature in a \COMMITTED message.
    \item When a backup receives a \COMMITTED message $c$ from leader $l$ and all preceding requests from $l$ have been delivered, it executes the client request associated with $c$'s sequence number.
    \item The client accepts a reply for a given request, once it received $f+1$ replies with the same result.
\end{enumerate}

\paragraph{Checkpoint Operation.}

Sequence numbers are assigned to leaders within given intervals (\textit{watermark bounds}). Additionally, leaders may only distribute a subset of these sequence numbers at any given point in time. This interval is a sliding window of sequence numbers (\textit{watermarks}) and updated using \textit{checkpoints}. The size of the interval is $2K$, with $K$ being a configuration parameter. Checkpoints prove the correctness of a given state by containing signatures of $2f+1$ replicas. The generation of checkpoints work as follows:

\begin{itemize}
    \item When a backup has $K$ continuous locally committed requests from a given leader $l$ since its last checkpoint, it will issue a \NEWCHECKPOINT message to the leader. In other words, when all requests with sequence numbers $sn$, such that $prev\_checkpoint\_sn <= sn < prev\_checkpoint\_sn + K$, where $prev\_checkpoint\_sn$ is the sequence number of the previous checkpoint of leader $l$, have been locally committed, the backup requests the generation of a new checkpoint for that leader.
    \item Upon receiving $2f$ \NEWCHECKPOINT messages for a given sequence number, the leader $l$ computes a combined signature out of these messages and its own signature and broadcasts it in a \CHECKPOINTED message.
    \item Replicas receiving a \CHECKPOINTED message verify its contents and advance the watermarks of leader $l$.
\end{itemize}

\subsection{Epoch-Change}

By rotating bucket assignments, \prot ensures progress and prevents request censoring by byzantine replicas. This bucket rotation takes place during a phase called \textit{epoch-change}. During this phase, no new client requests will effectively be treated.

\paragraph{Epoch-Change Initiation.}
Replicas will call for an epoch-change if at least one of the following conditions is met:

\begin{itemize}
    \item $2f+1$ leaders have exhausted all sequence numbers within their watermark bounds: a majority of correct leaders will not be able to make progress anymore; thus, replicas request an epoch-change.
    \item The \textit{epoch timer} runs out. This timer is set to a constant value and started at the beginning of each epoch.
    \item $2f+1$ \textit{no-progress timers} run out. Each replica has a set of $n$ no-progress timers. These timers are reset every time the replica commits a request from a given leader.
    \item A replica sees $f+1$ \EPOCHCHANGE messages for epochs higher than its current epoch.
\end{itemize}

In summary, replicas initiate epoch-changes either when the system no longer makes enough progress, or when the epoch's time runs out.

\paragraph{Epoch-Change Pipeline.}

The sequence of events in an epoch-change is as follows:
\begin{enumerate}
    \item When one of the above condition holds, replicas request an epoch-change to epoch $e+1$ by broadcasting an \EPOCHCHANGE message. They then set an \textit{epoch-change timer}. If the timer runs out before receiving a \NEWEPOCH message, the replica will request an epoch-change to epoch $e+2$ (and so on). Replicas increase their local epoch number and thus stop treating all messages related to old epochs.
    \item Once the primary for the current epoch (assigned in a round-robin manner) has received $2f$ \NEWEPOCH messages for epoch $e^*$, it will calculate the new replica configuration for $e^*$ and initiate a classical reliable 3-phase broadcast of the new replica configuration using \NEWEPOCHCONF messages. The new configuration takes into account past leader performances -- if a leader has successfully committed all assigned requests in the previous epoch, the watermark bounds are doubled, otherwise the new watermark bounds are equal to the number of previously processed requests. It then computes and broadcasts a \NEWEPOCH message for epoch $e^*$.
    \item Replicas receiving \NEWEPOCHCONF messages participate in the reliable broadcast and eventually adopt the new replica configuration for epoch $e^*$.
    \item Replicas receiving \NEWEPOCH messages verify the message's content by performing the same computation as the primary. Finally, replicas process the \NEWEPOCH message and resume normal operation by treating requests in their new bucket, according to the new replica configuration.
\end{enumerate}

\subsection{Format, Calculation and Processing of Epoch-Change Messages}
\label{sec:proc_newepoch}
The basis of \EPOCHCHANGE and \NEWEPOCH messages are \textit{epoch-change objects} (see Figure~\ref{fig:epochchangeobj}). 

\noindent Epoch-change objects in an \EPOCHCHANGE$<e, replica, sc, p, q>$ message contain:
\begin{itemize}
\item[\textbf{e}] The epoch-number that this epoch-change object relates to.
\item[\textbf{replica}] The unique replica ID of the owner/issuer of this object.
\item[\textbf{sc}] Map with keys \texttt{replica\_id} and values \texttt{<checkpoint\_sn,checkpoint\_cert>}, where \texttt{checkpoint\_sn} is the sequence number of the most recent checkpoint seen for \texttt{replica\_id} and \texttt{checkpoint\_cert} the corresponding combined signature verifying the checkpoint's validity.
\item[\textbf{p}] Map with keys \texttt{replica\_id} and values \texttt{map<sn, p\_certs>}, where \texttt{sn} is the sequence number of a prepared request (received \PREPARED) that has not been committed and for which \texttt{replica\_id} is the leader and \texttt{p\_certs} the combined signature of the corresponding \PREPARED message.
\item[\textbf{q}] Map with keys \texttt{replica\_id} and values \texttt{map<sn, q\_certs>}, where \texttt{sn} is the sequence number of a committed request (received \COMMITTED) that is not part of any checkpoints and for which \texttt{replica\_id} is the leader and \texttt{q\_certs} the combined signature of the corresponding \COMMITTED message.
\end{itemize}

Epoch-change objects \texttt{eco*} in a \NEWEPOCH message contain the same elements; however, their content are the "supersets" (denoted by \texttt{sc*},\texttt{p*}, \texttt{q*}) of the contents of all $2f+1$ \EPOCHCHANGE messages necessary for the generation of \NEWEPOCH messages. Specifically, the primary performs the following calculations.

To calculate \texttt{sc*}, for each replica $r$, the primary chooses the highest sequence number $sn'_r$ and corresponding certificate among all sets \texttt{sc} received through \EPOCHCHANGE messages and adds them to \texttt{sc*}. For all requests with sequence numbers $sn$ higher than $sn'_r$ belonging to $r$:
\begin{itemize}
    \item If there is a set \texttt{q'} among all sets \texttt{q} received through \EPOCHCHANGE messages that contains $sn$, $sn$ and the corresponding \texttt{q\_cert} are added to \texttt{q*}.
    \item If there is a set \texttt{p'} among all sets \texttt{p} received through \EPOCHCHANGE messages that contains $sn$ and no set \texttt{q'} among all sets \texttt{q} received through \EPOCHCHANGE messages contains $sn$, $sn$ and the corresponding \texttt{p\_cert} are added to \texttt{p*}.
\end{itemize}
Note, that both conditions are mutually exclusive. 

The \NEWEPOCH message contains an epoch-change object \texttt{eco*}, a set \texttt{eco\_proof} of $2f+1$ \EPOCHCHANGE messages ($2f$ received from replicas and the primary's) from which the \NEWEPOCH message was calculated, the message's hash \texttt{e\_hash} and the primary's certificate \texttt{ec\_cert}.

Replicas process \NEWEPOCH messages as follows:
\begin{itemize}
    \item For every element in \texttt{q*}, replicas re-send answers of requests back to clients.
    \item For every element in \texttt{p*}, replicas generate a new \PREPREPARE message with a reference (see \texttt{opt\_ref} in Figure~\ref{fig:pack}) to the original request from the previous epoch. The new message is a regular \PREPREPARE message and will go through the pipeline like any other message, but when replicas execute the request after receiving a \COMMITTED message, they will execute requests from the reference instead.
    \item For every sequence number $sn$ that is within watermarks of each leader, replicas generate \textit{null requests} and a new \PREPREPARE message with those requests as \texttt{opt\_ref}. Null requests are requests, whose execution results in no-ops. These messages will go through the pipeline like any other message.
\end{itemize}

\end{document}